\title{A General Framework for the Optimization of Energy Harvesting Communication
Systems\\ with Battery Imperfections}
\author{Bertrand Devillers, Deniz G\"{u}nd\"{u}z \thanks{This work is supported in part by EXALTED project (IT-258512) funded by
European Union's Seventh Framework Programme (FP7), and by the Spanish
Government under project TEC2010-17816 (JUNTOS). Deniz G\"{u}nd\"{u}z is supported by the European Commission's Marie Curie IRG Fellowship with reference number 256410 under the Seventh Framework Programme.}%
\thanks{B. Devillers and D. G\"{u}nd\"{u}z are with the Centre Tecnol\`ogic de Telecomunicacions de Catalunya (CTTC), Castelldefels, Barcelona,
Spain.}
\thanks{Emails: bertrand.devillers@cttc.es, deniz.gunduz@cttc.es }}
\date{}
\newtheorem{thm}{Theorem}[section]
\newtheorem{lem}[thm]{Lemma}
\newtheorem{defn}{Definition}[section]
\newtheorem{rem}{Remark}[section]
\newtheorem{algorithm}{Algorithm}[section]
\begin{document}
\maketitle \thispagestyle{empty} \pagestyle{empty}

\begin{abstract}
Energy harvesting has emerged as a powerful technology for complementing
current battery-powered communication systems in order to extend their
lifetime. In this paper a general framework is introduced for the optimization
of communication systems in which the transmitter is able to harvest energy
from its environment. Assuming that the energy arrival process is known
non-causally at the transmitter, the structure of the optimal transmission
scheme, which maximizes the amount of transmitted data by a given deadline, is
identified. Our framework includes models with continuous energy arrival as
well as battery constraints. A battery that suffers from energy leakage is
studied further, and the optimal transmission scheme is characterized for a
constant leakage rate.
 \end{abstract}

\begin{IEEEkeywords}
Battery leakage, battery size constraint,  broadcast channel, continuous energy
arrival, energy efficient communications, energy harvesting, rechargeable
wireless networks, throughput maximization.
\end{IEEEkeywords}


\section{Introduction}
\label{sec:intro}

Energy efficiency is a key challenge in the sustainable deployment of
battery-powered communication systems. Applications such as wireless sensor
networks depend critically on the lifetime of individual sensors, whose
batteries are limited due to physical constraints as well as cost
considerations. Power management is essential in optimizing the energy
efficiency of these systems in order to get the most out of the available
limited energy  in the battery. A complementary approach has recently been made
possible by introducing rechargeable batteries that can harvest energy from the
environment. Several different technologies have been proposed and implemented
for harvesting ambient energy such as solar, radio-frequency, thermoelectric or
solar (see \cite{Park:SECON:06, Le:SSC:08, Roundy:SMS:04, Alippi:TCS:08,
Paradiso:PC:05} and references therein for various examples of energy
harvesting technology).

Harvesting energy from the environment is an important alternative to
battery-run devices to extend their lifetime. However, it is important to
design the system operation based on the energy harvesting process to increase
the efficiency. Energy harvesting systems have received a lot of recent
attention \cite{Liu:INFOCOM:10, Gatzianas:WC:10, Lin:TN:07, Sharma:WC:10,
Castiglione:WIOPT:11}. Node and system level optimization have been considered
from both practical and theoretical perspectives. The previous work that are
most relevant to the problems studied in this paper are \cite{Yang:TC:10,
Tutuncuoglu:WC:10, Yang:TWC:10, Antepli:JSAC:11}. In \cite{Yang:TC:10}, the
problem of transmission time minimization is studied when the data and the
energy arrives at the transmitter in packets; and the transmission power is
optimized when the data and energy arrival times and amounts are known in
advance. In \cite{Tutuncuoglu:WC:10}, the amount of transmitted data is
maximized for an energy harvesting system under deadline and finite battery
capacity constraints. Reference \cite{Tutuncuoglu:WC:10} also shows that the
transmission time minimization problem studied in \cite{Yang:TC:10} and the
transmitted data maximization problem are duals of each other and their
solutions are identical for the same parameters. The problem is extended to the
broadcast channel in \cite{Yang:TWC:10,
Antepli:JSAC:11,Ozel:WIOPT:11,Ozel:TWC:11}, to the relay channel in
\cite{Gunduz:CAMSAP:11}, and to the multiple access channel in
\cite{Yang:ICC:11}.

The problem considered in this work is that of maximizing the amount of data
that is transmitted within a given deadline constraint under various
assumptions regarding the energy harvesting model as well as the battery
limitations. Our focus is on the offline optimization of the energy harvesting
communication system, that is, we assume that the energy arrival profile is
known in advance. We first introduce a general framework for transmitted data
maximization by adjusting the transmit power in an energy harvesting system
with battery limitations. Our model includes continuous energy harvesting,
generalizing the packetized energy arrival model considered in
\cite{Yang:TC:10} and \cite{Tutuncuoglu:WC:10}. Moreover, different from the
previous work, our model also includes the realistic scenario of battery
degradation over time by considering a time-varying battery capacity. We show
that these constraints can be modeled through cumulative harvested energy and
minimum energy curves, which are then used to obtain the optimal transmission
policy. The framework introduced for the energy harvesting system optimization
is similar to the calculus approach introduced by Zafer and Modiano for
energy-efficient data transmission in \cite{Zafer:TN:09}. We later show that
the proposed framework also applies to a broadcast channel with an energy
harvesting transmitter.

We then consider a more realistic battery model with energy leakage. Assuming a
constant leakage rate, we identify the optimal transmission strategy for the
case of a packetized energy arrival model.

The paper is organized as follows. Section \ref{s:general_model} presents the
system model. Optimal transmission scheme for a point-to-point system under
battery size constraints is derived in Sections~\ref{s:battery_size}. In
Section \ref{s:broadcast}, it is shown that the proposed framework can be used
to characterize the optimal transmission scheme in an energy-harvesting
broadcast channel. We consider battery leakage in Section
\ref{s:battery_leakage} and find the optimal transmission scheme for a linear
leakage rate.  Finally, conclusions are provided in Section
\ref{Section:conclusions}.


\section{System Model}\label{s:general_model}

We consider a continuous-time model for both the harvested  and the transmitted
energy, that is, the harvested energy is modeled as a continuous-time process,
while the transmitter is assumed to be able to adjust its transmission power,
and hence, the transmission rate, instantaneously. This continuous-time model
generalizes the discrete-time arrival model considered in \cite{Yang:TC:10} and
\cite{Tutuncuoglu:WC:10}. A cumulative curve approach is used to described the
flow of energy in the system.


\begin{defn} [Harvested Energy Curve]
The harvested energy curve $H(t)$ is a right-continuous function of time $t$,
$t\in \mathbb{R}^{+}$, that denotes the amount of energy that has been
harvested in the interval $[0,t]$.
\end{defn}

\begin{defn} [Transmitted Energy Curve]
The transmitted energy curve $E(t)$ is a continuous function with bounded right
derivative, that denotes the amount of energy that has been used for data
transmission in the interval $[0,t]$, $t\in \mathbb{R}^{+}$.
\end{defn}

Naturally, we require $E(t) \leq H(t)$, i.e., the transmitter cannot use more
energy than that has arrived. We also consider a ``minimum energy curve'' that
might model, for example, a battery size constraint.

\begin{defn} [Minimum Energy Curve]
Given an harvested energy curve $H(t)$, a minimum energy curve $M(t)$ is a
function satisfying $M(t) \leq H(t)$, $\forall t \geq 0$, and denotes the
minimum amount of energy that needs to be used by the system until time $t$.
\end{defn}

Given the harvested energy curve and the minimum energy curve, a feasible
transmitted energy curve should satisfy the conditions $M(t) \leq E(t) \leq
H(t)$, $\forall t\geq 0$. Among all feasible transmitted energy curves, our
goal is to characterize the one that transmits the highest amount of data over
a given finite time interval $[0,T]$. We consider offline optimization, that
is, the harvested and the minimum energy curves are assumed to be known in
advance\footnote{This is an accurate assumption for systems in which the energy
harvesting process can be modeled as a deterministic process. For example, in
solar based systems the amount of energy that can be harvested at various times
of the day can be modeled quite accurately. In some other systems, harvested
energy depends on the operating schedule of the harvesting device rather than
the energy source, such as shoe-mounted piezoelectric devices; and the
harvested energy curve can be modeled accurately in advance.}.

We assume that the instantaneous transmission rate relates to the power of
transmission at time $t$ through a rate function $r(\cdot)$, which is a
non-negative strictly concave increasing function of the power with $r(0) =0$.
We note here that many common transmission models, such as the capacity of an
additive white Gaussian noise channel, satisfy these conditions
\cite{Zafer:TN:09}. The total transmitted data corresponding to a given curve
$E(t)$ over the interval $[0,T]$ is found by
\begin{eqnarray}
    \mathcal{D}(E(t)) \triangleq \int_{0}^T r(E'(t)) dt,
\end{eqnarray}
where $E'(t)$ is the derivative of function $E(t)$ at time $t$, and it gives
the power of transmission at that instant while $r(E'(t))$ is the corresponding
transmission rate.

\begin{figure*}
\centering \small \subfigure[System with packet arrivals and a time-varying
battery constraint.]
    {%
        \psfrag{time}{time}\psfrag{e}{energy}
        \psfrag{Mt}{$M(t)$}\psfrag{Ht}{$H(t)$}
        \psfrag{Bt}{$b(t)$}
        \label{f:battery_time}
        \includegraphics[width=3in]{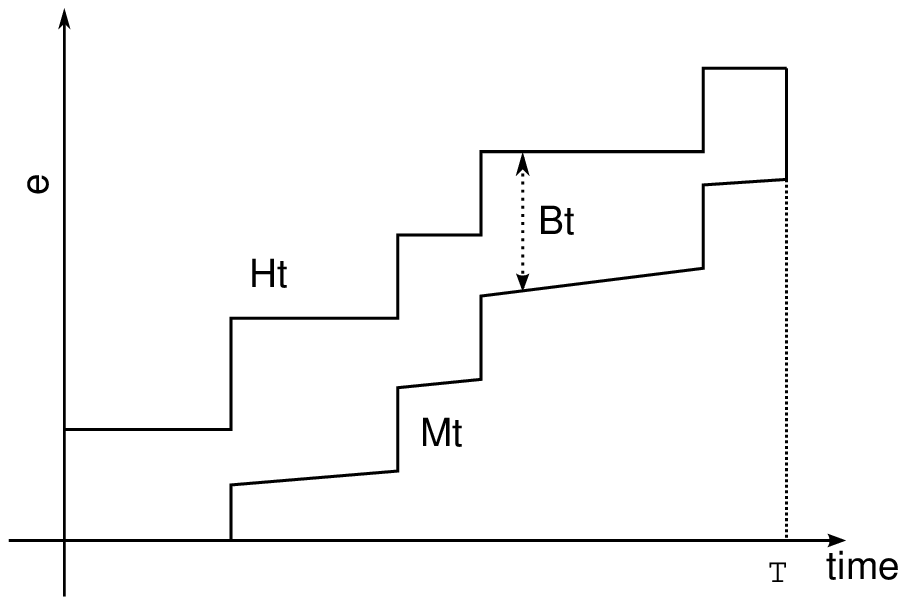}
    }%
\subfigure[The ``dying'' battery example.]
    {%
        \psfrag{t1}{$t_1$}\psfrag{t2}{$t_2$}\psfrag{t3}{$t_3$}\psfrag{tN}{$t_N$}
        \psfrag{e1}{$b_1$}\psfrag{e12}{$\hspace{-0.1in}b_1 + b_2$}\psfrag{EN}{$\hspace{-0.1in}E_N$}
        \psfrag{Mt}{$M(t)$}\psfrag{Ht}{$H(t)$}\psfrag{Eopt}{$\hspace{-0.07in}E_{opt}(t)$}
        \label{f:battery_dying}
        \includegraphics[width=3in]{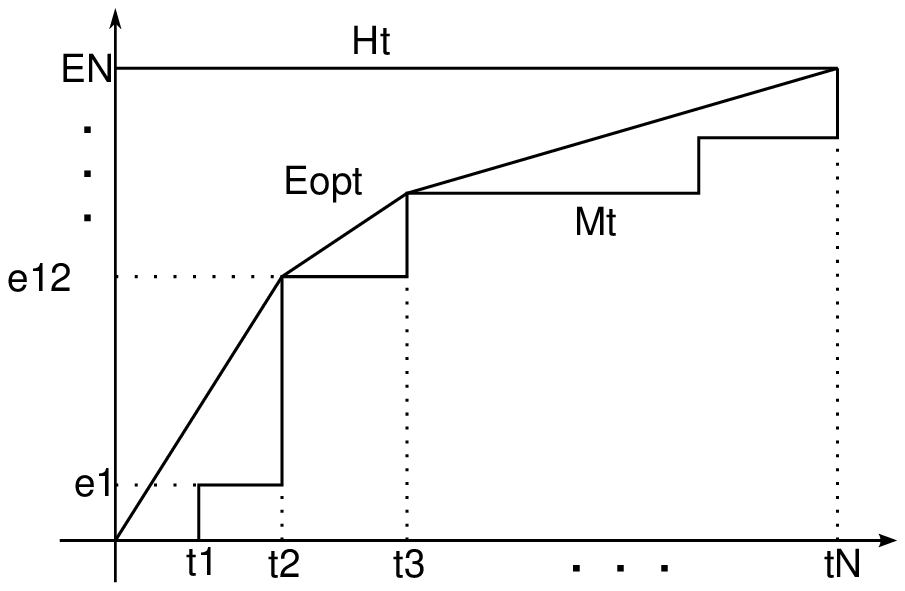}
    }%
    \caption{Illustration of the harvested and minimum energy curves for different examples.}
    \vspace{-0.5cm}
\end{figure*}

\section{Optimal Transmission Scheme under Battery Size Constraints}\label{s:battery_size}


In our problem formulation we assume that the transmitter always has data to
transmit. Hence, the minimum energy curve can be used to model a constraint on
the battery size, forcing the system to use any energy that cannot be stored in
the battery for transmission of additional data before it is discarded. For a
fixed energy curve $E(t)$ and unlimited battery size, the energy that is
available in the battery at time instant $t$ is given by $H(t)-E(t)$. However,
if the battery size is $b$, we should have $H(t)-E(t)\leq b$. Consequently, the
associated minimum energy curve is given by $M(t) = \max\{H(t)-b, 0\}$.

We can also consider a time-varying battery capacity $b(t)$, which can model
the degradation in the battery capacity over time. This is a common phenomenon
in rechargeable batteries used for energy harvesting applications. See Fig.
\ref{f:battery_time} for an illustration of the harvested and minimum energy
curves for a battery with continuously decreasing capacity.

Now, the optimization problem can be stated as follows.
\begin{eqnarray}
    \max_{E(t) \in \Gamma} & \mathcal{D}(E(t)) = \int_{0}^T r(E'(t)) dt \label{opt_single_user} \\
    \mbox{such that} & M(t) \leq E(t) \leq H(t), \forall t \in [0,T],
\end{eqnarray}
where $\Gamma$ specifies the set of all non-decreasing, continuous functions
with bounded right derivatives for all $t \in[0,T]$ and with $E(0) = 0$.




%

We first present the optimality conditions for the transmitted energy curve.
Similar to previous studies, such as \cite{Yang:TC:10},
\cite{Tutuncuoglu:WC:10}, \cite{Zafer:TN:09} and \cite{Uysal:TN:02}, our main
tool is the Jensen's inequality given in the following lemma (in the integral
form).

\begin{lem}\label{l:jensen}[Jensen's inequality]
Let $f:[a,b] \rightarrow \mathds{R}$ be a non-negative real valued function,
and $\phi(\cdot)$ be a concave function on the real line, then
\begin{align}
  \phi\left( \int_a^b f(t)dt \right) & \geq \int_a^b \frac{\phi((b-a)f(t))}{b-a}dt,
\end{align}
with strict inequality if $\phi(\cdot)$ is strictly concave, $a \neq b$, and
$f$ is not constant over the interval $[a,b]$.
\end{lem}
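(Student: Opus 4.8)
The plan is to reduce the statement to the classical Jensen inequality for a probability measure by recognizing that $\tfrac{dt}{b-a}$ is a probability measure on $[a,b]$, and then to prove it directly from the supporting-line (supergradient) characterization of concavity. Setting $\bar{x} \triangleq \int_a^b f(t)\,dt$, I first observe that $\bar{x}$ is exactly the average of the rescaled integrand $g(t) \triangleq (b-a)f(t)$ against the uniform measure, since $\int_a^b \tfrac{g(t)}{b-a}\,dt = \int_a^b f(t)\,dt = \bar{x}$.

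Next I would invoke the defining geometric property of a concave function: at the point $\bar{x}$ there exists a finite supergradient $c$ such that $\phi(y) \le \phi(\bar{x}) + c\,(y-\bar{x})$ for every $y \in \mathbb{R}$. Such a $c$ exists because $\phi$ is concave on all of $\mathbb{R}$, so every point is interior to its domain. Evaluating this bound at $y = g(t) = (b-a)f(t)$ and dividing by $b-a$ gives the pointwise inequality
\begin{align}
  \frac{\phi((b-a)f(t))}{b-a} \le \frac{\phi(\bar{x})}{b-a} + \frac{c}{b-a}\bigl((b-a)f(t)-\bar{x}\bigr). \nonumber
\end{align}
Integrating this over $[a,b]$ is the crux: the left side becomes the right-hand side of the lemma, while on the right the linear term integrates to $c\bigl(\int_a^b f(t)\,dt - \bar{x}\bigr) = 0$ by the very choice of $\bar{x}$, leaving just $\phi(\bar{x}) = \phi\bigl(\int_a^b f(t)\,dt\bigr)$. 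This yields the claimed inequality.

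For the strict case I would argue that, when $\phi$ is strictly concave, the supporting-line bound $\phi(y) < \phi(\bar{x}) + c\,(y-\bar{x})$ is strict for every $y \neq \bar{x}$. Hence the pointwise inequality above is strict precisely on the set where $(b-a)f(t) \neq \bar{x}$, i.e. where $f(t)$ differs from its average $\tfrac{\bar{x}}{b-a}$. The main obstacle is to make this last step rigorous: I expect the delicate point to be verifying that if $f$ is not constant (and $a \neq b$) then this set has positive Lebesgue measure, so that integrating the strict pointwise inequality preserves strictness. This follows because a function equal to its own average almost everywhere must be constant a.e.; along the way one must also confirm the existence and finiteness of the supergradient $c$ and the integrability of $f$, both of which are guaranteed by concavity on $\mathbb{R}$ together with the standing assumption that $\bar{x}$ is well defined.
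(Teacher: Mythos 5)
Your proof is correct. Note that the paper itself offers no proof of this lemma at all: it is invoked as the classical integral form of Jensen's inequality (citing its use in prior work such as \cite{Zafer:TN:09} and \cite{Uysal:TN:02}), so there is no ``paper route'' to compare against. Your supergradient argument is the standard, complete way to establish it: the substitution $g(t)=(b-a)f(t)$ correctly identifies $\bar{x}=\int_a^b f(t)\,dt$ as the mean of $g$ under the uniform probability measure $\tfrac{dt}{b-a}$, the finite supergradient $c$ at $\bar{x}$ exists because a concave function on all of $\mathbb{R}$ has finite one-sided derivatives everywhere, and integrating the supporting-line bound kills the linear term by the choice of $\bar{x}$. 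Your treatment of the strict case is also sound: strict concavity forces the supporting line to touch the graph only at $\bar{x}$ (a midpoint argument shows any second contact point would violate the supergradient inequality), and a nonnegative measurable gap that is positive on a set of positive measure integrates to something positive. The one caveat you correctly flag is that ``$f$ is not constant'' as literally written does not quite suffice --- $f$ could differ from a constant only on a null set, in which case equality holds --- so the strictness claim really requires $f$ to differ from its average on a set of positive measure (equivalently, $f$ is not a.e.\ constant). That is an imprecision in the lemma's phrasing rather than a gap in your argument, and it is harmless in the paper's application, where $f$ is taken to be the (bounded) right derivative of the transmitted energy curve.
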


Consider the simple setup in which the battery has available energy $E_0$ at
time $t=0$, no further energy is harvested, and the minimum energy curve is
given as $M(t) = 0$ for $0 \leq t < T$ and $M(T)= E_0$. We will prove for this
simple setting that the constant power curve transmits the maximum amount of
data over the time interval $[0,T]$.

For any transmitted energy curve $E(t)$ with non-constant power, by replacing
the function $f$ in Lemma \ref{l:jensen} with $E'(t)/T$, and letting $a=0$,
$b=T$ and $\phi(\cdot) = r(\cdot)$, we obtain
\begin{align}
  r \left( \int_0^T \frac{E'(t)}{T} dt \right) & > \int_0^T \frac{r(E'(t))}{T}dt,
\end{align}
which is equivalent to
\begin{align}
  \int_0^T r(E'(t)) dt < Tr\left(\frac{E_0}{T}\right).
\end{align}
Note that $Tr\left(\frac{E_0}{T}\right)$ is the transmitted data by the
constant power scheme. Hence, this proves the fact that the maximum data is
transmitted by this scheme. We can express this result in a more general
context as in the following theorem.

\begin{thm}\label{t:opt_cond}
Let $E(t)$ be a feasible transmitted energy curve and $S(t)$ be a straight line
segment over interval $[a,b]$ that joins $E(a)$ and $E(b)$, $0\leq a< b\leq T$.
If $S(t)$ satisfies $M(t) \leq S(t) \leq H(t)$ for $a\leq t\leq b$, the
transmitted energy curve defined as
\begin{align}
  \hat{E}(t) & =
        \begin{cases}
            E(t), & t \in [0, a) \\
            S(t), & t \in [a, b) \\
            E(t), & t \in [b, T]
        \end{cases}
\end{align}
satisfies $\mathcal{D}(\hat{E}(t)) \geq \mathcal{D}(E(t))$.
\end{thm}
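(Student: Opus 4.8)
The plan is to exploit the additive structure of $\mathcal{D}$ over the time axis. Since $\hat{E}$ and $E$ coincide on $[0,a)$ and on $[b,T]$, their transmitted data differ only through the contribution of the interval $[a,b]$, so it suffices to show
\begin{align}
  \int_a^b r(\hat{E}'(t))\, dt \;\geq\; \int_a^b r(E'(t))\, dt.
\end{align}
First I would record the two facts that make the segment $S$ special. Because $S$ is the straight line joining $(a,E(a))$ and $(b,E(b))$, its derivative is the constant slope $S'(t) = \frac{E(b)-E(a)}{b-a}$ on $(a,b)$, and by the fundamental theorem of calculus $\int_a^b E'(t)\,dt = E(b)-E(a)$, so the average of $E'$ over $[a,b]$ equals exactly this slope.

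Second I would invoke Lemma \ref{l:jensen}, applied with $\phi(\cdot)=r(\cdot)$, which is concave by assumption, and with $f(t) = E'(t)/(b-a)$. Since $E \in \Gamma$ is non-decreasing we have $E'(t)\ge 0$, so $f$ is the required non-negative function. The inequality reads
\begin{align}
  r\!\left( \int_a^b \frac{E'(t)}{b-a}\,dt \right) \;\geq\; \int_a^b \frac{r(E'(t))}{b-a}\,dt.
\end{align}
Using the average-equals-slope identity on the left and multiplying through by $(b-a)>0$ yields
\begin{align}
  (b-a)\, r\!\left(\frac{E(b)-E(a)}{b-a}\right) \;\geq\; \int_a^b r(E'(t))\,dt.
\end{align}
Finally I would identify the left-hand side with the transmitted data of the segment: since $\hat{E}'(t)=S'(t)$ is the constant $\frac{E(b)-E(a)}{b-a}$ on $[a,b]$, the integral $\int_a^b r(\hat{E}'(t))\,dt$ equals exactly $(b-a)\,r\!\left(\frac{E(b)-E(a)}{b-a}\right)$. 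Combining the displays gives the inequality on $[a,b]$, and hence $\mathcal{D}(\hat{E}) \geq \mathcal{D}(E)$.

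Two bookkeeping points remain to confirm that $\hat{E}$ is an admissible competitor. For continuity, because $S(a)=E(a)$ and $S(b)=E(b)$ the pieces match at the junctions, so $\hat{E}$ is continuous with bounded right derivative; for feasibility, the hypothesis $M(t)\le S(t)\le H(t)$ on $[a,b]$ together with the feasibility of $E$ on the complementary intervals guarantees $M\le \hat{E}\le H$ throughout $[0,T]$, while monotonicity follows since $S$ is linear with non-negative slope (the slope is non-negative as $E$ is non-decreasing). I expect the only genuine subtlety to be verifying that the hypotheses of Jensen's inequality truly hold, namely that $E'$ exists almost everywhere and is integrable with $E'\ge 0$; this is ensured by $E$ being a non-decreasing continuous function of bounded right derivative, after which everything reduces to a direct substitution.
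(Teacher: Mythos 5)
Your argument is correct and is essentially the paper's own: the authors establish the special case $[a,b]=[0,T]$ via the same application of Lemma~\ref{l:jensen} with $f=E'/T$ and $\phi=r$, and Theorem~\ref{t:opt_cond} is simply that computation localized to $[a,b]$ together with the observation that the contributions outside $[a,b]$ cancel. Your bookkeeping on feasibility and on the validity of $\int_a^b E'(t)\,dt = E(b)-E(a)$ (which holds since a continuous function with bounded right derivative is Lipschitz, hence absolutely continuous) only makes explicit what the paper leaves implicit.
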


The following theorems state, respectively, the uniqueness of the optimal
transmitted energy curve and the optimality conditions. Their proofs follow
similarly to those of Theorem~2 and Lemmas~2-4 in \cite{Zafer:TN:09}.

\begin{thm}\label{t:unique}
For a strictly concave rate function $r(\cdot)$, if $\tilde{E}(t)$ is a
feasible transmitted energy curve which does not have any two points that can
be joined by a distinct feasible straight line, then $\tilde{E}(t)$ is unique
and it maximizes the transmitted data.
\end{thm}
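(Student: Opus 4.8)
The plan is to split the claim into two independent parts: that the stated curve $\tilde{E}$ achieves the maximum, and that the maximizer is unique. Uniqueness of the maximizer will in turn force uniqueness of any curve having the stated ``no distinct feasible chord'' property, because I will show that every such curve is a maximizer. The first part is the substantial one and relies on the local improvement principle of Theorem~\ref{t:opt_cond}; the second part is purely a convexity argument.

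First I would establish the geometric structure forced by the hypothesis. I claim $\tilde{E}$ is piecewise linear, and that at every interior breakpoint it touches a boundary, either $\tilde{E}(\tau)=H(\tau)$ or $\tilde{E}(\tau)=M(\tau)$. Indeed, if $\tilde{E}$ had a corner (a change of right derivative) at an interior time $\tau$ while remaining strictly inside the corridor $M<\tilde{E}<H$ on a neighborhood, then the chord joining two nearby points $\tilde{E}(\tau-\epsilon)$ and $\tilde{E}(\tau+\epsilon)$ would be feasible and distinct from $\tilde{E}$, contradicting the hypothesis. The same local-straightening argument shows $\tilde{E}$ is linear on each inter-contact interval. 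A short additional argument, using that the taut corner must open away from the active boundary, then gives the sign of the slope change: the transmit power increases across a contact with the upper curve $H$ and decreases across a contact with the lower curve $M$.

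Next I would prove optimality by comparison. Let $0=\tau_0<\tau_1<\dots<\tau_n=T$ be the breakpoints and $p_i$ the constant power of $\tilde{E}$ on $[\tau_{i-1},\tau_i]$. For an arbitrary feasible $E$, Jensen's inequality (Lemma~\ref{l:jensen}) on each segment gives $\int_{\tau_{i-1}}^{\tau_i} r(E'(t))\,dt \le (\tau_i-\tau_{i-1})\, r\!\left(\frac{E(\tau_i)-E(\tau_{i-1})}{\tau_i-\tau_{i-1}}\right)$. Writing $g_i(x)=(\tau_i-\tau_{i-1})\,r\!\big(x/(\tau_i-\tau_{i-1})\big)$, which is concave, and bounding it by its supporting line at the increment of $\tilde{E}$ (whose slope is $r'(p_i)$), I obtain $\mathcal{D}(E)-\mathcal{D}(\tilde{E})\le \sum_i r'(p_i)\,\big(d(\tau_i)-d(\tau_{i-1})\big)$ with $d=E-\tilde{E}$. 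An Abel summation turns this into a boundary term $r'(p_n)\,d(T)$ plus interior terms $\big(r'(p_i)-r'(p_{i+1})\big)\,d(\tau_i)$. Feasibility of $E$ forces $d(\tau_i)\le 0$ at an upper contact and $d(\tau_i)\ge 0$ at a lower contact, while the slope-sign result together with $r'$ being decreasing (strict concavity) makes the paired factor carry the opposite sign; hence every interior term is $\le 0$. Provided the taut curve uses all harvested energy by the deadline, $\tilde{E}(T)=H(T)$, the boundary term is $\le 0$ as well, and I conclude $\mathcal{D}(E)\le\mathcal{D}(\tilde{E})$.

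Finally, uniqueness of the maximizer is the easy, purely convex part: the feasible set is convex (the pointwise intervals $[M(t),H(t)]$, monotonicity, and $E(0)=0$ are all preserved under averaging) and $\mathcal{D}$ is strictly concave. If $E_1\ne E_2$ were both optimal, then $\bar{E}=\tfrac12(E_1+E_2)$ is feasible, and since $E_1,E_2$ agree at $t=0$ but differ somewhere, their derivatives differ on a set of positive measure; strict concavity of $r$ then yields $\mathcal{D}(\bar{E})>\tfrac12\mathcal{D}(E_1)+\tfrac12\mathcal{D}(E_2)$, contradicting optimality. Combining the two parts, $\tilde{E}$ attains the maximum and the maximizer is unique, so $\tilde{E}$ is the unique optimal curve. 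The main obstacle is the optimality part: making the corner/contact structure rigorous when $H$ may be discontinuous and $E$ is only required to have a bounded right derivative, and in particular pinning the terminal condition $\tilde{E}(T)=H(T)$ that controls the boundary term of the Abel summation. The uniqueness part, by contrast, needs nothing beyond strict concavity and convexity of the feasible set.
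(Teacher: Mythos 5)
First, a point of reference: the paper does not actually prove Theorem~\ref{t:unique}; it defers to Theorem~2 and Lemmas~2--4 of \cite{Zafer:TN:09}, whose argument is an iterative local-improvement/contradiction scheme built on the analogue of Theorem~\ref{t:opt_cond}. Your route --- segment-wise Jensen (Lemma~\ref{l:jensen}), a supporting-line bound at the increments of $\tilde{E}$, and a summation by parts whose interior terms are killed by pairing the sign of the slope change (your structural step, consistent with Theorem~\ref{t:power_change}) against the sign of $d=E-\tilde{E}$ at each contact --- is a genuinely different, self-contained global certificate of optimality, and your uniqueness half (strict concavity of $\mathcal{D}$ on the convex feasible set) is clean and correct.

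That said, there are two genuine gaps. (1) The structural claim that $\tilde{E}$ is piecewise linear with finitely many breakpoints fails in exactly the regime this paper adds over \cite{Yang:TC:10} and \cite{Tutuncuoglu:WC:10}: with continuous energy arrival (the solar-panel example of Fig.~\ref{f:solar_energy}) the taut curve coincides with the curved $H(t)$ over a whole interval, the transmit power varies continuously there, and there is no finite set $\{\tau_i\}$ over which to Abel-sum. The argument can be repaired by replacing the finite sum with a Riemann--Stieltjes integration by parts of $\int_0^T r'(\tilde{E}'(t))\,d'(t)\,dt$ against the signed measure $d\bigl[r'(\tilde{E}'(t))\bigr]$ (right derivatives throughout), after showing from the no-chord hypothesis that this measure is nonpositive on the contact set $\{t:\tilde{E}(t)=H(t)\}$ where $d\le 0$, nonnegative on $\{t:\tilde{E}(t)=M(t)\}$ where $d\ge 0$, and zero elsewhere --- but that is a more delicate lemma than the corner argument you give, and it is the actual content needed for the continuous case. (2) The terminal condition $\tilde{E}(T)=H(T)$, which you correctly identify as controlling the boundary term $r'(p_n)\,d(T)$, genuinely does not follow from the hypothesis as stated: when $M\equiv 0$, the identically zero curve has no two of its points joinable by a \emph{distinct} feasible chord, yet transmits nothing. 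Either the theorem implicitly assumes the curve terminates at $(T,H(T))$ (as in \cite{Zafer:TN:09}, where the endpoint is pinned), or the no-chord condition must be read against points of the feasible corridor rather than points of the curve; your proof cannot close without making one of these explicit.
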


\begin{thm}\label{t:power_change}
Let $E_{opt}(t)$ be the optimal energy expenditure curve and $t_0$ be any point
at which the power of transmission changes, i.e., the slope of $E_{opt}(t)$
changes. Then, at $t_0$, $E_{opt}(t)$ intersects either $H(t)$ or $M(t)$. If
$E_{opt}(t_0) = H(t_0)$, then the slope change must be positive. If
$E_{opt}(t_0) = M(t_0)$, then the slope change must be negative.
\end{thm}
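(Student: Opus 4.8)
The plan is to argue by contradiction in all three cases, using Theorem~\ref{t:opt_cond} together with the strict form of Jensen's inequality (Lemma~\ref{l:jensen}) as the improvement engine: whenever we can replace a stretch of $E_{opt}(t)$ on which the power genuinely varies by the straight chord joining its endpoints, while keeping that chord inside the feasible band $M(t)\le S(t)\le H(t)$, the transmitted data \emph{strictly} increases, contradicting optimality of $E_{opt}$. The only ingredient needed beyond Theorem~\ref{t:opt_cond} is that the inequality there becomes strict when the power is non-constant on $[a,b]$, which is exactly the strict-Jensen statement already exploited in the paragraph preceding that theorem.

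First I would establish that a slope change can occur only on one of the two boundary curves. Suppose the slope of $E_{opt}$ changes at $t_0$ but both constraints are slack, $M(t_0)<E_{opt}(t_0)<H(t_0)$. By continuity of $E_{opt}$, together with the right-continuity of $H$ and $M\le H$, this slackness persists on a neighbourhood of $t_0$, so I can pick $a<t_0<b$ close enough to $t_0$ that the chord $S$ joining $E_{opt}(a)$ and $E_{opt}(b)$ stays strictly between $M$ and $H$ on $[a,b]$. Since the slope changes at $t_0$, the power $E_{opt}'$ is non-constant on $[a,b]$, so replacing $E_{opt}$ by $S$ strictly increases $\mathcal{D}$, a contradiction. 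Hence every slope-change point lies on $H$ or on $M$.

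Next I would fix the sign of the change. Suppose $E_{opt}(t_0)=H(t_0)$ but the change is negative (a downward, concave kink). A concave kink forces the chord joining $E_{opt}(a)$ and $E_{opt}(b)$, for any $a<t_0<b$, to lie \emph{below} $E_{opt}$ on $(a,b)$; since $E_{opt}\le H$, the upper constraint $S\le H$ then holds automatically, and taking $a,b$ close to $t_0$ keeps $S$ above $M$ because $M(t_0)<H(t_0)=E_{opt}(t_0)$ makes the lower constraint slack nearby. The chord is therefore feasible and, the power being non-constant, strictly improves $\mathcal{D}$, which is impossible; thus the change at $H$ must be positive. The case $E_{opt}(t_0)=M(t_0)$ is the mirror image: a positive (convex) kink makes the chord lie \emph{above} $E_{opt}\ge M$, so $S\ge M$ is automatic, and for $a,b$ close to $t_0$ the chord stays below $H$ since $M(t_0)<H(t_0)$; the same strict improvement yields a contradiction, forcing the change at $M$ to be negative.

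I expect the main obstacle to be the careful verification that the chord remains feasible with respect to the \emph{slack} constraint in each case, namely $S\ge M$ in the $H$-case and $S\le H$ in the $M$-case. This is where the one-sided regularity of the data enters: I would combine the continuity of $E_{opt}$, the monotonicity and right-continuity of $H$, and the strict gap $M(t_0)<H(t_0)$ to guarantee that shrinking $[a,b]$ toward $t_0$ eventually keeps the chord inside the band. The degenerate possibility $M(t_0)=H(t_0)$ pins $E_{opt}(t_0)$ with no freedom and need not be counted as a genuine slope change. Once feasibility is secured, the strict-Jensen improvement does all the remaining work.
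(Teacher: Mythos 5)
Your overall strategy---local chord replacement via Theorem~\ref{t:opt_cond}, upgraded to a \emph{strict} improvement by the strict form of Lemma~\ref{l:jensen}---is the right one, and it is essentially the argument of Lemmas~2--4 of Zafer and Modiano to which the paper defers (the paper gives no self-contained proof). The sign analysis in your second half is also sound in spirit: a downward kink at a point of $H$ yields a chord lying below $E_{opt}\le H$, and an upward kink at a point of $M$ yields a chord lying above $E_{opt}\ge M$, so in each case only the opposite constraint needs checking, and it is slack nearby.

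The genuine gap is in your first step, where you claim that $M(t_0)<E_{opt}(t_0)<H(t_0)$ forces the slackness to ``persist on a neighbourhood of $t_0$.'' The curve $H$ is only right-continuous, so this controls nothing to the \emph{left} of $t_0$: for $t<t_0$ one only has $H(t)\to H(t_0^-)$, which can be strictly smaller than $H(t_0)$, and feasibility together with continuity of $E_{opt}$ forces $E_{opt}(t_0)\le H(t_0^-)$. Hence the curve can be pinned against $H$ from the left, with $E_{opt}(t_0)=H(t_0^-)<H(t_0)$, while both of your pointwise slackness conditions hold. This is not a pathology but the generic situation in the packetized arrival model: at every energy arrival instant $t_i$ where the optimal power increases, $E_{opt}(t_i)=H(t_i^-)<H(t_i)$, the upward kink pushes the chord above $H(t)$ for $t$ just below $t_i$, and no contradiction is available---yet your argument would declare such a kink impossible and thereby ``disprove'' the known optimal policy of \cite{Yang:TC:10}. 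The repair is to run the limiting feasibility check carefully: for an upward kink the chord's maximum on $[a,b]$ tends to $E_{opt}(t_0)$ while $\inf_{[a,b]}H$ tends to $H(t_0^-)$, so an upward slope change requires $E_{opt}(t_0)=H(t_0^-)$; for a downward kink the chord's minimum tends to $E_{opt}(t_0)$ while $\sup_{[a,b]}M$ tends to $M(t_0)$ by right-continuity, so a downward change requires $E_{opt}(t_0)=M(t_0)$. As written, your proof is valid only under the additional hypothesis that $H$ is continuous; a secondary (lesser) point is that your ``the chord lies below/above $E_{opt}$ on $(a,b)$'' claims presuppose that $[a,b]$ can be chosen to isolate a single kink with well-defined one-sided slopes, which deserves a sentence of justification for a general element of $\Gamma$.
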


The optimal transmitted energy curve is also the one that has the minimum
length, and hence, the same ``string visualization'' suggested in
\cite{Zafer:TN:09} can be applied here. The string visualization suggests that,
if we tie one end of a string to the origin and connect it to the point $(T,
H(T))$ tightly while constraining it to lie between $H(t)$ and $M(t)$, this
string gives us the optimal energy expenditure policy.

A special case of the framework considered here is the one with packetized
energy arrivals and without any battery constraint. This is the
energy-harvesting dual of the packet arrival problem considered in
\cite{Uysal:TN:02}. As it is shown in \cite{Tutuncuoglu:WC:10}, this is
equivalent to the problem of transmission time minimization problem studied in
\cite{Yang:TC:10}. In this problem we have $M(t) = 0$ for $t \in [0,T]$, and
$N$ energy packets arrive at times $\{t_i\}_{i=1}^{N}$. The algorithm that
gives the optimal transmitted energy curve for this problem can be obtained
following \cite{Yang:TC:10} and \cite{Zafer:TN:09}.

\begin{figure}
    \centering \small
    \psfrag{A}{$A$}
    \includegraphics[width=3in]{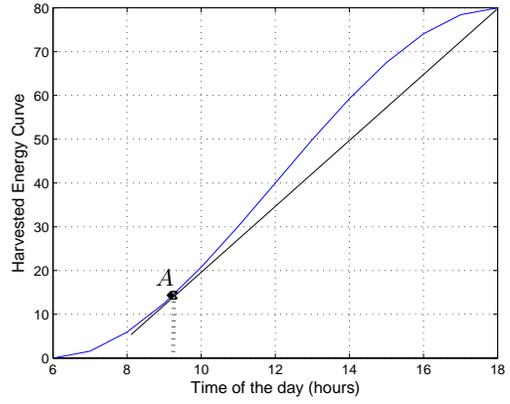}
    \caption{Continuous energy harvesting curve for a solar panel.}
        \label{f:solar_energy}
\end{figure}
%

Another example that fits into the general structure introduced above is the
following. Consider a wireless system with an energy storage unit consisting of
$N$ batteries. Assume that all the batteries are full initially and a total of
$E_N = \sum_{i=1}^N b_i$ energy is available in the system at time $t=0$, where
$b_i$ is the capacity of battery $i$. It is assumed that the batteries in the
system have finite lifetime, and they die at certain time instants, $t_i$,
$i=1, \ldots, N$. The problem is to find the maximum amount of data that can be
transmitted until the last battery dies, i.e., until $t_N$. In this problem we
have $H(t)=E_N$ for $t \in [0,t_N]$, and $M(t)$ can be obtained as in Fig.
\ref{f:battery_dying}. Note that, since once the battery dies, the energy
stored in it is not available for transmission anymore, and since we always
have data in the queue to be transmitted, it is always beneficial to use the
available energy in a battery before it dies. In this sense, we can consider
the time until a battery dies as a deadline constraint on the time the
available energy in this battery should be used. The optimal transmitted energy
curve can then be found using the string argument as seen in
Fig.~\ref{f:battery_dying}.

As an example of continuous energy arrival, we consider here a model of a solar
panel harvesting energy during the day. The amount of energy harvested per unit
of time changes during the day. While no energy is harvested when there is no
sun, the harvested energy is maximized at noon (see \cite{Scansen}). We model
the rate of harvested energy with the function $h(t)=5 - \frac{5}{36}(t-12)^2$
for $6\leq t\leq 18$, and $h(t)=0$ elsewhere, where $t \in [0,24]$ denotes the
time of the day (hours), such that $H(t)=\int_0^t h(\tau) d\tau$. The unit of
energy depends on the solar panel characteristics. The corresponding harvested
energy curve is depicted in Fig. \ref{f:solar_energy}.

Assume that we want to maximize the amount of data that can be transmitted up
to time $t=18$, i.e., until the panel stops harvesting energy. Based on the
above arguments, the optimal transmitted energy curve is identified as follows.
First we draw a tangent to the harvested energy curve from the point $(18,
H(18))$, and denote its intersection with the curve by $A$. The transmitted
energy curve follows the harvested energy curve from the origin up to $A$
\footnote{In practice, a continuous adaptation of the transmission rate is
unrealistic due to the block structure of channel coding, and the finite number
of modulation and coding modes available. However, such practical constraints
are out of the scope of this paper.}. Afterwards, it follows the straight
tangent line, i.e., it uses constant power transmission. Note that, while it is
easy to prove
 the optimality of this strategy using Theorem~\ref{t:unique}, the discrete energy
 arrival models studied in \cite{Yang:TC:10} and \cite{Tutuncuoglu:WC:10} do not apply here.

\section{Optimal Broadcast Scheme with Battery Constraint}\label{s:broadcast}

In this section, we show that the general approach introduced in Section
\ref{s:battery_size} can be instrumental in identifying the optimal
transmission policy in a broadcast channel (BC) with an energy harvesting
transmitter \cite{Yang:TWC:10}, \cite{Antepli:JSAC:11}. Consider the same
energy harvesting model at the transmitter as before; however, now there are
two receivers in the system, and the transmitter has independent data for each
receiver.


The BC problem is studied in \cite{Yang:TWC:10} and \cite{Antepli:JSAC:11};
however, the solutions in these papers are elaborated from the basics
rederiving the behavior of the optimal transmission policy in the BC scenario.
Here, we show that the general approach introduced in previous section for the
point-to-point setting can be directly applied to the BC scenario as well. This
approach allows to generalize the results in \cite{Yang:TWC:10} and
\cite{Antepli:JSAC:11} to continuous energy arrivals, and introduce battery
constraints in the problem formulation \cite{Ozel:WIOPT:11,Ozel:TWC:11}.

We consider an additive white Gaussian BC in which the signal received at
receiver $i$ is given by
\begin{align}
    Y_i = X + Z_i, ~~~~i=1,2,
\end{align}
where $X$ is the channel input of transmitter and $Z_i$ is the zero-mean
Gaussian noise component with variance $N_i$. Without loss of generality, we
assume that $N_2 > N_1 > 0$ \footnote{The case with $N_1=N_2$ reduces to the
single receiver problem.}.
Let $B_i(t)$ denote the total number of bits transmitted to receiver $i$ up to
time $t$. Our goal is to maximize the weighted sum of transmitted bits by time
$T$, $\mu_1 B_1(T) + \mu_2 B_2(T)$ for some $\mu_1, \mu_2 \geq 0$.


In the broadcast channel setting, the transmitter not only needs to identify
the transmitted energy curve $E(t)$, but also has to decide how to allocate the
power among the two receivers at each time instant. Accordingly, we denote by
$p_1(t)$ and $p_2(t)$ the power allocated to each receiver at time $t$. The
optimization problem can be written as follows.
\begin{eqnarray}\label{opt_broadcast}
    \max_{\substack{p_1(t),p_2(t)\geq 0}} &  \mu_1 \int_{0}^T r_1(t) dt + \mu_2 \int_{0}^T r_2(t) dt \\
    \mbox{such that} & M(t) \leq \int_0^t p_1(\tau) + p_2(\tau) d\tau \leq H(t), t \in [0,T] \nonumber
\end{eqnarray}
We assume that the rate-power functions are operating on the boundary of the
capacity region of the Gaussian BC:
\begin{align}
    r_1(t) &= \frac12 \log_2 \left(1+\frac{p_1(t)}{N_1}\right) \\
    r_2(t) &= \frac12 \log_2 \left(1+ \frac{p_2(t)}{p_1(t)+N_2} \right).
\end{align}
The considered optimization can be decoupled into two maximization problems as
follows:
\begin{equation}\small
    \max_{\substack{ E(t)\in \Gamma \\ M(t) \leq E(t) \leq H(t) }} \int_0^T \left[ \max_{\substack{p_1(t),p_2(t)\geq 0 \\ p_1(t) + p_2(t)=p(t)}} \mu_1 r_1(t)  + \mu_2 r_2(t) \right]  dt, \label{eq: problem decoupled}
   \end{equation}
where we define $p(t) = E'(t)$.


First, we consider the maximization problem in between brackets in \eqref{eq:
problem decoupled}. Defining $\mu \triangleq \frac{\mu_2}{\mu_1}$, we can make
the following observations on its solution\footnote{The time variable $t$ is
omitted for conciseness.}:
\begin{itemize}
  \item If $\mu > \frac{N_2}{N_1}$, no power is allocated to the first receiver, i.e. $p_1=0$, independent of the total power.
  \item If $\mu \leq 1$, no power is allocated to the second receiver, i.e. $p_2=0$, independent of the total power.
  \item When $1 <\mu \leq \frac{N_2}{N_1}$, the optimal power allocation behaves as follows. If the available total power is below $p_{th} \triangleq \frac{N_2-\mu N_1}{\mu-1}$, all the total power is allocated to receiver 1, i.e., $p_1=p$ and $p_2=0$. On the other hand, if $p \geq p_{th}$, then we have $p_1 = p_{th}$ and $p_2 = p-p_{th}$.
\end{itemize}
Note that, if $\mu > \frac{N_2}{N_1}$ or $\mu \leq 1$, the problem reduces to
the point-to-point setting; hence, we assume $1 < \mu \leq \frac{N_2}{N_1}$ in
the remainder. We can write the outcome of the maximization problem in between
brackets in \eqref{eq: problem decoupled} as
\begin{align}\label{rate_broadcast}\small{
    r(p) \triangleq
        \begin{cases}
            \frac{\mu_1}{2} \log_2 \left(1+\frac{p}{N_1}\right) \hfill \text{if } 0\leq p \leq p_{th},\\
            \frac{\mu_1}{2} \log_2 \left(1+\frac{p_{th}}{N_1}\right) + \frac{\mu_2}{2} \log_2 \left(1+ \frac{p-p_{th}}{p_{th}+N_2}\right)  \text{if } p_{th} \leq p.
        \end{cases}}
\end{align}

Then we can rewrite the optimization problem in \eqref{eq: problem decoupled}
in the same form as the point-to-point problem in (\ref{opt_single_user}) with
a rate function given in \eqref{rate_broadcast}. We next prove that this rate
function is strictly concave.

\begin{lem}
The rate function $r(p)$ in (\ref{rate_broadcast}) is a strictly concave
function of power $p$.
\end{lem}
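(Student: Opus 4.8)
The plan is to prove strict concavity of the piecewise function $r(p)$ by showing that its first derivative $r'(p)$ is \emph{continuous} and \emph{strictly decreasing} on the whole domain $p \geq 0$; for a one-dimensional function this is equivalent to strict concavity. I would carry this out in three moves: strict concavity of each branch, continuity of $r$ at the junction $p_{th}$, and---most importantly---continuity of the slope at $p_{th}$.

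First I would differentiate each branch. For $0 \leq p \leq p_{th}$ a direct computation gives $r'(p) = \frac{\mu_1}{2\ln 2}\,\frac{1}{N_1+p}$ and $r''(p) = -\frac{\mu_1}{2\ln 2}\,\frac{1}{(N_1+p)^2} < 0$. For $p \geq p_{th}$ the leading $\log_2$ term is constant in $p$, so differentiating the remaining term yields $r'(p) = \frac{\mu_2}{2\ln 2}\,\frac{1}{N_2+p}$ and again $r''(p) < 0$. Thus each branch is strictly concave and $r'$ is strictly decreasing within each sub-interval. Continuity of $r$ at $p_{th}$ is then immediate: evaluating the second branch at $p = p_{th}$ makes its second $\log_2$ term vanish, leaving exactly the value $\frac{\mu_1}{2}\log_2(1 + p_{th}/N_1)$ of the first branch.

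The key step is to verify that the left and right derivatives coincide at $p_{th}$, so that $r'$ does not jump up across the junction. Substituting $p_{th} = \frac{N_2 - \mu N_1}{\mu - 1}$ with $\mu = \mu_2/\mu_1$, I would simplify the two denominators to
\begin{align}
  N_1 + p_{th} = \frac{N_2 - N_1}{\mu - 1}, \qquad N_2 + p_{th} = \frac{\mu\,(N_2 - N_1)}{\mu - 1}, \nonumber
\end{align}
both positive since $N_2 > N_1$ and $\mu > 1$. Plugging these into the one-sided slopes gives $r'(p_{th}^-) = \frac{\mu_1}{2\ln 2}\,\frac{\mu - 1}{N_2 - N_1}$ and $r'(p_{th}^+) = \frac{\mu_2}{2\ln 2}\,\frac{\mu - 1}{\mu(N_2 - N_1)}$; using $\mu_2 = \mu\mu_1$ these are seen to be equal, so $r$ is in fact differentiable at $p_{th}$. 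The threshold was evidently chosen precisely so that the two branches meet with a common slope.

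Finally I would conclude. Since $r'$ is continuous on $[0,\infty)$ and strictly decreasing on each of $[0,p_{th}]$ and $[p_{th},\infty)$ with equal values at the junction, it is strictly decreasing on the entire domain; hence $r(p)$ is strictly concave. The only real obstacle is the algebra of the third step---the simplification of $N_1 + p_{th}$ and $N_2 + p_{th}$---which is exactly what makes the slope-matching work out and thereby fuses two separately concave pieces into a globally strictly concave function.
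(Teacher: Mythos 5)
Your proposal is correct and follows the same route as the paper's proof, which simply asserts that $r(p)$ is continuous, differentiable, and has a decreasing derivative; you supply the explicit branch derivatives and the slope-matching computation at $p_{th}$ that the paper leaves as "easy to show." The algebra for $N_1+p_{th}$ and $N_2+p_{th}$ and the resulting equality $r'(p_{th}^-)=r'(p_{th}^+)$ all check out.
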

\begin{proof}
It is easy to show that $r(p)$ is continuous, differentiable, and its
derivative is decreasing with~$p$; hence, it is a strictly concave function of
$p$.
\end{proof}

Now, based on this form of the optimization problem, we can directly use the
results of Section \ref{s:battery_size} in the broadcast channel setting in
order to identify the optimal transmission scheme for an energy harvesting
transmitter. Note that as opposed to \cite{Yang:TWC:10} and
\cite{Antepli:JSAC:11}, our solution is valid for continuous energy arrivals as
well as transmitters with various battery constraints. Once the optimal total
transmit power over time is characterized, the power allocation among the users
at each instant can be found using (\ref{rate_broadcast}).

\section{Optimal Transmission Scheme with Battery Leakage}\label{s:battery_leakage}

In Sections \ref{s:battery_size} and \ref{s:broadcast} and references therein,
the battery has been considered to be ideal, that is, there was no energy
leakage. In this section, we consider the more realistic scenario of a battery
that leaks part of the stored energy.


The leakage rate of a battery depends on the type (Li-ion batteries have a
lower leakage rate compared to the nickel-based ones), age and usage of the
battery as well as the medium temperature. Moreover, even for a fixed type of
battery and medium temperature, the leakage rate changes over time; the
batteries leak most right after being charged. However, for simplicity, a
constant rate leakage model is considered here. If the battery is non-empty at
a given time instant, the energy is assumed to leak from the battery at a
constant finite rate denoted by $\epsilon\geq 0$. Obviously no leakage occurs
if the battery is empty.  We use the same cumulative curve approach to model
the battery leakage process. Note that the leakage rate $\epsilon$ can
alternatively be interpreted as the constant operation power of the node, that
is, the circuit power needed to maintain the node awake.


\begin{defn} [Energy Leakage Curve] The energy leakage curve $L(t)$ is the amount of energy that has leaked from the battery in the
time interval $[0,t]$, $t\in \mathbb{R}^{+}$, with $L(0)=0$. Due to the
constant leakage rate assumption, $L(t)$ is a continuous, non-decreasing
function whose right-derivative is given by
\begin{align}
L'_{+}(t) =
       \begin{cases}
        \epsilon, & \text{if}\quad E(t)<H(t)-L(t),\\
        0, & \text{otherwise}.
       \end{cases}
\end{align}
\end{defn}

To highlight the effect of leakage, we do not consider any minimum energy curve
in this section, i.e., $M(t)=0$ $\forall t$, and we focus only on discrete
energy packet arrivals. Defining a maximum energy curve as $U(t) \triangleq
H(t)-L(t)$, the feasibility condition on the transmitted energy curve becomes
$0\leq E(t)\leq U(t)$. We tackle again the problem of characterizing the
feasible transmitted energy curve that transmits
 the most data over a given finite time interval $[0,T]$. The corresponding optimization problem can be stated as
\begin{eqnarray}
    \max_{E(t) \in \Gamma} & \mathcal{D}(E(t)) = \int_{0}^T r(E'(t)) dt \label{eq: problem leakage} \\
    \mbox{such that} &  0\leq E(t)\leq U(t),  \forall t \in [0,T]. \label{eq: constraint leakage}
\end{eqnarray}


\begin{rem}
Unlike the battery size constraint studied in Section \ref{s:battery_size}, the
battery leakage phenomenon does not translate into a minimum energy curve, but
into a maximum energy curve obtained by removing the total leaked energy from
the harvested energy curve. More importantly, the leakage curve is a function
of the transmitted energy curve. Consequently, the maximum energy curve
inherently depends on the transmitted energy curve, and hence, the solution
framework presented in Section \ref{s:battery_size} does not directly extend to
this setup.
\end{rem}




Throughout this section,we consider the discrete energy harvesting process in
which the $n$-th energy packet of size $E_n$ arrives at time instant $t_n$ for
$n=1, \ldots, N$. Without loss of generality, the first packet is assume to
arrive at time $t=0$ (i.e., $t_1=0$). We call this general setup the $N$-packet
problem. As before, we assume that the transmitter always has enough data to
transmit. Below, we characterize the optimal transmission scheme first for the
single-packet problem (i.e., $N=1$), and then for the general $N$-packet
problem.

\subsection{The Single-Packet Problem}
\label{s:single packet problem}

\begin{figure}
\centering \small \psfrag{time}{time}\psfrag{e}{energy}
\psfrag{Ut}{{\color{red}$U(t)$}}\psfrag{Ht}{$H(t)$}
\psfrag{Et}{{\color{blue}$E(t)$}} \psfrag{p}{{\color{blue}$p$}} \psfrag{E}{$E$}
\psfrag{eps}{{\color{red}-$\epsilon$}}\psfrag{t}{$t=\frac{E}{p+\epsilon}$}
\includegraphics[width=2.7in]{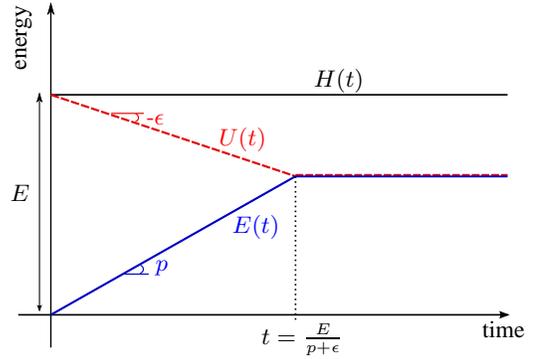}
\caption{The S$_\infty$ problem.} \label{f:SP problem}
\end{figure}

\begin{figure}
\centering \small \psfrag{time}{time}\psfrag{e}{energy}
\psfrag{Uti}{{\color{red}$\widetilde{U}(t)$}}\psfrag{Ht}{$H(t)$}
\psfrag{Et}{{\color{blue}$E(t)$}} \psfrag{p}{{\color{blue}$p$}}
\psfrag{E1}{$E$} \psfrag{eps}{{\color{red}-$\epsilon$}}\psfrag{T}{$T$}
\psfrag{s}{{\color{blue}$s=\frac{E}{T}-\epsilon$}} \psfrag{A}{$A$}
\includegraphics[width=2.7in]{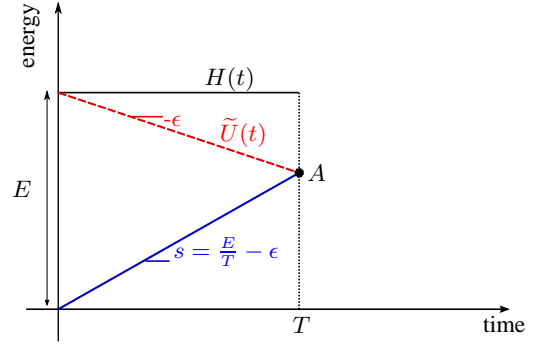}
\caption{The S$_T$ problem.} \label{f:SP problem finite}
\end{figure}

We consider here the simplified problem consisting of a single energy packet
$E$ harvested at time $t=0$. We refer to it as the single-packet problem. The
solution of this problem will serve as a building block for the general
$N$-packet problem.

First, let us treat the single-packet problem with infinite\footnote{Note that,
in the case of energy leakage, potential transmit time is finite when the
number of harvested energy packets is finite as the available energy decays to
zero even if no data is transmitted.} deadline constraint (i.e., $T=\infty$),
and denote it by S$_\infty$. It is depicted in Fig.~\ref{f:SP problem}.
Following Section \ref{s:battery_size}, it is not hard to show that the optimal
transmitted energy curve $E(t)$ has to be piecewise linear, and the slope
changes occur only if $E(t)$ intersects $U(t)$. Consequently, the optimal
$E(t)$ for the S$_\infty$ problem is as shown in Fig.~\ref{f:SP problem}: the
node transmits at a constant power $p$ until the battery runs out of energy.
One can see that there is a trade-off in the choice of $p$: while it is more
energy efficient to transmit at lower power for a longer period of time, the
longer the transmission time, the more energy will be wasted due to leakage.
The optimization problem in \eqref{eq: problem leakage}-\eqref{eq: constraint
leakage} becomes
\begin{eqnarray}
\max_{p\geq0} & \mathcal{D}(E(t)) = \frac{E}{p+\epsilon}\ r(p).
\end{eqnarray}
Assuming that $r(p)$ is a strictly concave increasing function with $r(0)=0$,
and a finite leakage rate $\epsilon$, the function \mbox{$f(p) \triangleq
\frac{r(p)}{p+\epsilon}$} achieves its maximum at a finite $p\in
\mathbb{R}^{+}$, as shown in Appendix~\ref{s:appendix properties r}. We denote
the corresponding optimal value by $p^*$. Note that while the total amount of
transmitted data is proportional to $E$,  $p^*$ is independent of~$E$.
Summarizing, the optimal transmission strategy for the S$_\infty$ problem is to
transmit at constant power $p^*$ until the battery is empty. The total amount
of transmitted data is $\frac{E}{p^*+\epsilon}\ r(p^*)$.

We next consider the single-packet problem with a fixed transmission deadline
$T$, and denote it by S$_T$. It is depicted in Fig.~\ref{f:SP problem finite},
and the following notations are defined: \mbox{$\widetilde{U}(t) \triangleq
H(t) - \epsilon t$} (we assume $\widetilde{U}(t)>0$ for all $0 \leq t \leq T$,
as otherwise the problem is equivalent to the S$_\infty$ problem). We denote
the point $(T, \widetilde{U}(T))$ by $A$. Finally, the slope of the line
segment from the origin to $A$ is denoted by $s$. We have $s = E/T - \epsilon$.
As before $p^*$ denotes the value that maximizes the function $f(p)$. Note
that, as shown in Appendix~\ref{s:appendix properties r}, $f(p)$ is strictly
decreasing for $p>p^*$. Hence, building on the solution derived for the
S$_\infty$ problem, the solution of the S$_T$ is easily derived:
\begin{itemize}
\item if $s<p^*$, transmit at constant power $p^*$ until the battery is empty.
\item else, transmit at constant power $s$ during the whole $[0,T]$ interval.
\end{itemize}
In short, the optimal transmission strategy for the S$_T$ problem is to
transmit at constant power $\tilde{p}=\max\left(p^*,s\right)$ for a time
duration $\frac{E}{\tilde{p}+\epsilon}$ (that is, until the battery is empty),
and remain silent afterwards. The amount of transmitted data is
$\frac{E}{\tilde{p}+\epsilon}r(\tilde{p})$.

\begin{figure*}[tb!]
\centering \small \psfrag{time}{time}\psfrag{e}{energy}
\psfrag{Ut}{{\color{red}$U(t)$}}\psfrag{Ht}{$H(t)$} \psfrag{Lt}{$L(t)$}
\psfrag{tUt}{{\color{red}$\tilde{U}(t)$}} \psfrag{Et}{{\color{blue}$E(t)$}}
\psfrag{p}{{\color{blue}$\tilde{p}$}} \psfrag{E}{$E=2\bar{E}$}
\psfrag{E1}{$\bar{E}$} \psfrag{E2}{$\bar{E}$} \psfrag{E3}{$E_3$}
\psfrag{a1}{$t_1$} \psfrag{a2}{$t_2$} \psfrag{a3}{$t_3$}
\psfrag{T1}{$\widetilde{T}_1$} \psfrag{T}{$T$}
 \psfrag{T2}{$\widetilde{T}_2$} \psfrag{T3}{$\widetilde{T}_3$}
\psfrag{eps}{{\color{red}-$\epsilon$}}\psfrag{t}{$\frac{E}{p^*+\epsilon}$}
\psfrag{a}{\scriptsize{(a) Original N$_T$ problem}} \psfrag{b}{\scriptsize{(b)
Original N$_T$ problem}}
\includegraphics[width=6.5in]{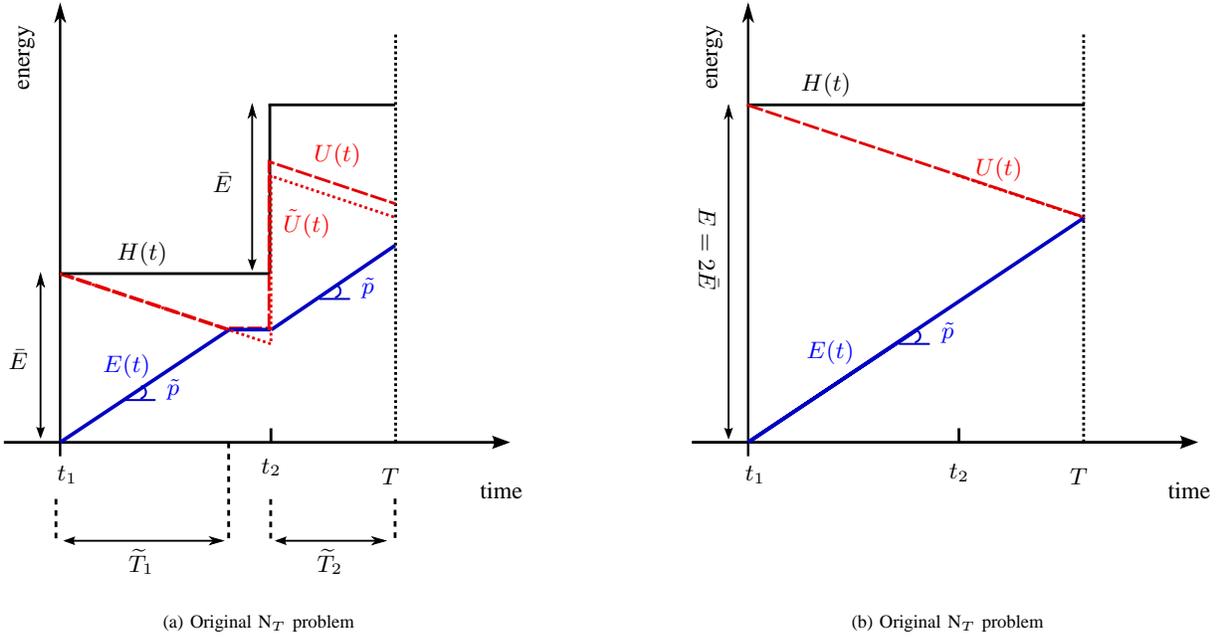}
\caption{Counterexample for the equivalence of the N$_T$ and S$_T$ problems,
with $N=2$ energy packets. } \label{f:N Packets problem counterexample}
\end{figure*}

\subsection{The $N$-Packet Problem}

We consider here the general $N$ packet problem with finite deadline constraint
$T$, denoted as the N$_T$ problem.

We start with the following lemma which proves that the optimal solution of the
N$_T$ problem can be emulated in the equivalent S$_T$ problem with
$E=\sum_{n=1}^{N}E_n$. That is, having all energy packets at time $t=0$ is at
least as good as having them arrive over time. Let us denote by $D_{N_{T}}$ and
$D_{S_{T}}$ the optimal solutions (in terms of total transmitted data) of the
N$_T$ and equivalent S$_T$ problems, respectively.

\begin{lem}
\label{Lemma1} The optimal solution of the N$_T$ problem can be obtained in the
equivalent S$_T$ problem with $E=\sum_{n=1}^{N}E_n$. That is, we have
$D_{S_{T}}\geq D_{N_{T}}$.
\end{lem}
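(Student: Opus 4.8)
The plan is to upper-bound the data transmitted by any feasible N$_T$ policy (in particular the optimal one) by the \emph{explicit} optimum of the S$_T$ problem derived above, using an efficiency argument combined with Jensen's inequality. I deliberately avoid trying to replay the N$_T$ power profile directly in the S$_T$ setting: since front-loading all the energy at $t=0$ keeps the battery fuller for longer, it generally increases the leakage, so the same profile need not be feasible in S$_T$.

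First I would establish a structural fact: in any feasible N$_T$ policy, both transmission and leakage occur only while the battery is non-empty. Indeed, on any interval where the battery is empty and no packet arrives, $E(t)=U(t)=H(t)-L(t)$ is constant (because $L$ is flat when the battery is empty and $H$ is flat between arrivals), so the transmit power $E'(t)$ vanishes there, while by definition no leakage occurs when the battery is empty. Let $\theta$ denote the total Lebesgue measure of the non-empty set $\mathcal N\subseteq[0,T]$ (a union of intervals); then $\theta\leq T$, the leaked energy is exactly $\epsilon\theta$, and writing $\mathcal E_{tx}=\int_{\mathcal N}E'(t)\,dt$ for the transmitted energy, feasibility at $t=T$, namely $E(T)\leq U(T)=H(T)-\epsilon\theta$, gives $\mathcal E_{tx}+\epsilon\theta\leq H(T)=E$, i.e. $\mathcal E_{tx}\leq E-\epsilon\theta$.

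Next I would apply Jensen's inequality (Lemma \ref{l:jensen}) to the concave $r$ over $\mathcal N$. Since $r(0)=0$ and the power vanishes off $\mathcal N$, the total data satisfies $D_{N_T}=\int_{\mathcal N}r(E'(t))\,dt\leq \theta\, r\!\big(\mathcal E_{tx}/\theta\big)\leq \theta\, r\!\big(E/\theta-\epsilon\big)$, where the last step uses $\mathcal E_{tx}/\theta\leq E/\theta-\epsilon$ together with monotonicity of $r$. Defining $g(\theta)\triangleq\theta\, r(E/\theta-\epsilon)$, this bounds $D_{N_T}\leq g(\theta)$ for the particular $\theta\in(0,T]$ of the N$_T$ optimum, hence $D_{N_T}\leq\max_{\theta\in(0,T]}g(\theta)$.

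Finally I would identify $\max_{\theta\in(0,T]}g(\theta)$ with $D_{S_T}$. The substitution $q=E/\theta-\epsilon$ turns $g$ into $E\,f(q)$ with $f(q)=r(q)/(q+\epsilon)$, and $\theta\in(0,T]$ corresponds exactly to $q\in[s,\infty)$ where $s=E/T-\epsilon$. By the properties of $f$ established in Appendix \ref{s:appendix properties r} (maximized at $p^*$ and strictly decreasing for $q>p^*$), the maximizer of $E\,f(q)$ over $q\geq s$ is $q=\max(p^*,s)=\tilde p$, with value $\frac{E}{\tilde p+\epsilon}r(\tilde p)$, which is precisely the S$_T$ optimum found earlier; thus $D_{N_T}\leq D_{S_T}$. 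I expect the main obstacle to be the structural step: one must argue carefully that off $\mathcal N$ the power and leakage both vanish, and that Jensen may legitimately be applied over the generally disconnected set $\mathcal N$, while the arrival instants and the two boundary regimes $s\leq p^*$ and $s>p^*$ (which determine where $g$ attains its maximum on $(0,T]$) are the places that require attention.
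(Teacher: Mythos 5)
Your argument is correct, but it proves the lemma by a genuinely different route than the paper. The paper's proof is constructive: it takes the optimal N$_T$ profile (piecewise-constant powers $\overline{p}_n$ over durations $\overline{T}_n$, separated by silent periods), compacts all the transmission periods to the front of the S$_T$ timeline, and verifies feasibility there by noting that the compacted $E(t)$ is increasing while $U(t)=E-\epsilon t$ is decreasing and the two meet exactly at $t=\sum_n\overline{T}_n$ by energy conservation -- compacting is precisely what neutralizes the extra-leakage worry you raise about naive replay, since the battery leaks for the same total duration $\sum_n\overline{T}_n$ in both problems. You instead bypass emulation entirely: you bound the data of \emph{any} feasible N$_T$ policy by $\theta\,r(E/\theta-\epsilon)$ via Jensen over the non-empty set $\mathcal{N}$, then optimize over $\theta$ and recognize the result as $E\,f(\tilde{p})=D_{S_T}$ using the unimodality of $f$ from Appendix A. What your approach buys is independence from the structural characterization of the N$_T$ optimum (the paper invokes Theorem~\ref{t:unique} to assert piecewise linearity, and separately asserts that the optimum empties the battery; you need only the inequality $\mathcal{E}_{tx}\leq E-\epsilon\theta$); what it costs is reliance on the closed-form S$_T$ solution and the appendix properties of $f$, plus the mild technical points you already flag (Jensen over a disconnected $\mathcal{N}$, and the degenerate regime $E-\epsilon T\leq 0$ where $s\leq 0\leq p^*$ and the bound still reduces to the S$_\infty$ value). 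Both proofs are sound; yours is arguably the tighter logical statement, the paper's the more elementary one.
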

\begin{proof}
Consider the optimal curve for the $N_T$ problem, and divide the $[0,T]$ time
interval into $N$ sub-intervals: $[t_1,t_2]$, $[t_2,t_3]$, $\dots$,
$[t_n,t_{n+1}]$, $\dots$, $[t_N,T]$. We denote by $T_n$ the duration of the
$n^{th}$ interval, i.e., $T_n\triangleq t_{n+1}-t_n$ for $n=1,\dots,N-1$, and
$T_N\triangleq T-t_N$. From Theorem~\ref{t:unique}, we know that the optimal
transmitted energy curve is a piecewise linear function, which is composed of
constant power periods possibly separated by silent intervals (i.e. horizontal
segments) in case the battery runs out of energy. Accordingly, we define the
optimal solution of the $N_T$ problem by the sequences
$\{\overline{p}_1,\overline{p}_2,\dots,\overline{p}_N\}$ and
$\{\overline{T}_1,\overline{T_2},\dots,\overline{T}_N\}$, meaning that the node
transmits for time $\overline{T}_n$ (with $\overline{T}_n\leq T_n$) at power
$\overline{p}_n>0$ in the $n^{th}$ interval. The node is silent in the
remainder of the interval, i.e., for time $T_n-\overline{T}_n$.

The data transmitted by this transmission strategy is \mbox{$D_{N_{T}}=\sum_{n=1}^N
\overline{T}_n\ r(\overline{p}_n)$}. The total transmit energy is $\sum_{n=1}^N
\overline{T}_n\ \overline{p}_n$, while the total energy leakage is $\epsilon
\sum_{n=1}^N \overline{T}_n$. Since the optimal solution should eventually
empty the battery, we have
\begin{equation}
\sum_{n=1}^N \overline{T}_n\ \overline{p}_n+\epsilon  \sum_{n=1}^N
\overline{T}_n\ =\sum_{n=1}^N E_n. \label{eq: energy conservation lemma 1}
\end{equation}

We now argue that this optimal solution can be emulated in the S$_T$ problem
with $E=\sum_{n=1}^{N}E_n$. Consider the following transmission strategy $E(t)$
for the S$_T$ problem: transmit at constant power equal to $\overline{p}_1$ for
time $\overline{T}_1$, followed by $\overline{p}_2$ for time $\overline{T}_2$,
and so on, ending with $\overline{p}_N$ for time $\overline{T}_N$. By
construction, this strategy transmits the same amount of data $D_{N_{T}}$ as
the optimal solution of the N$_T$ problem. We conclude the proof by showing
that this strategy is feasible, that is, $E(t)\leq U(t)$ for all $t\in[0,T]$.
Since the node is constantly transmitting during the interval
$[0,\sum_{n=1}^{N}\overline{T}_n]$, the curve $U(t)$ is constantly
decreasing\footnote{We assume $U(t)>0$ in the considered time interval, as
otherwise the problem can be divided into equivalent subproblems.} during this
interval at rate~$\epsilon$, i.e.~$U(t)=\sum_{n=1}^{N}E_n-\epsilon t$, for
$t\in [0,\sum_{n=1}^{N}\overline{T}_n]$. We have
\begin{eqnarray}
U\left(\sum_{n=1}^{N}\overline{T}_n\right)&=&\sum_{n=1}^{N}E_n-\epsilon \sum_{n=1}^{N}\overline{T}_n \\
&=& \sum_{n=1}^N \overline{T}_n\ \overline{p}_n \label{eq:proof feasibility} \\
&=& E\left(\sum_{n=1}^{N}\overline{T}_n\right)
\end{eqnarray}
where the equality \eqref{eq:proof feasibility} follows from \eqref{eq: energy
conservation lemma 1}. This proves the feasibility of $E(t)$.

Having proved the achievability of $D_{N_{T}}$ in the equivalent S$_T$ problem,
the inequality $D_{S_{T}}\geq D_{N_{T}}$ naturally follows.
\end{proof}


The counterpart of Lemma \ref{Lemma1} in the other direction does not always
hold, that is, the optimal solution of the equivalent S$_T$ problem cannot
always be emulated in the original N$_T$ problem. A counterexample can indeed
easily be constructed, which is depicted in Fig.~\ref{f:N Packets problem
counterexample}. Part (a) of the figure depicts a 2-packet problem with
$E_1=E_2=\bar{E}$, $t_1=0$, and $t_2>T/2$. In part (b) the equivalent S$_T$
problem is depicted.  Let the optimal transmission power for the S$_T$ problem
be given by  $\tilde{p}=\frac{2\bar{E}}{T}-\epsilon$. This solution cannot be
emulated in the original N$_T$ problem. In fact, as shown in part (a), the node
cannot transmit a constant power $\tilde{p}$ during the full $[0,T]$ time
interval as the battery runs out of energy at time $T/2$, and the node has to
remain silent during the time interval $[T/2,t_2]$.

However, in the following lemma, we provide a sufficient condition for the
counterpart of Lemma \ref{Lemma1} to hold. For this, we define $A_i$ as the
point on the $\widetilde{U}(t)$ curve corresponding to the time instant
$t_{i+1}$, for $i=1,2,\ldots,N-1$, and $A_N$ as the point corresponding to time
$t=T$, as illustrated in Fig.~\ref{f:NP problem finite}.

%
%

\begin{figure}
\centering \small \psfrag{time}{time}\psfrag{e}{energy}
\psfrag{Uti}{{\color{red}$\widetilde{U}(t)$}}\psfrag{Ht}{$H(t)$}
\psfrag{Et}{{\color{blue}$E(t)$}} \psfrag{p}{{\color{blue}$p$}}
\psfrag{E1}{$E_1$} \psfrag{eps}{{\color{red}-$\epsilon$}}\psfrag{T}{$T$}
\psfrag{s0}{{\color{blue}$s_0$}}\psfrag{s1}{{\color{blue}$s_1$}}\psfrag{sn}{{\color{blue}$s_n$}}
\psfrag{A1}{$A_1$}
\psfrag{A2}{$A_2$}\psfrag{A3}{$A_3$}\psfrag{AN}{$A_N$}\psfrag{An}{$A_n$}\psfrag{E1}{$E_1$}
\psfrag{E2}{$E_2$} \psfrag{E3}{$E_3$} \psfrag{a1}{$t_1$} \psfrag{a2}{$t_2$}
\psfrag{a3}{$t_3$} \psfrag{A}{$A$}
\includegraphics[width=2.7in]{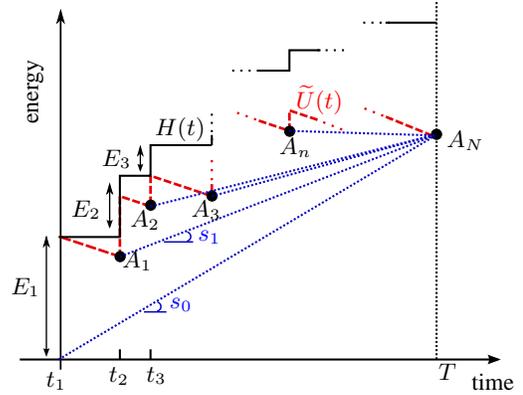}
\caption{A $N_T$ problem satisfying the conditions of Lemma~\ref{Lemma2}.}
\label{f:NP problem finite}
\end{figure}

\begin{lem}
\label{Lemma2} If the line segment from the origin to the point $A_N$ does not
cross the curve $\widetilde{U}(t)$ at any other point than $\{A_1,\dots,A_N\}$,
then the optimal solution of the S$_T$ problem with $E=\sum_{n=1}^{N}E_n$ can
be obtained in the N$_T$ problem, and $D_{N_{T}}\geq D_{S_{T}}$. This
sufficient condition is expressed by the following $N-1$ inequalities:
\begin{equation}
\frac{\sum_{n=1}^{i} E_n}{\sum_{n=1}^{i}T_n}\geq\frac{\sum_{n=1}^{N} E_n}{T},
\qquad i=1,\dots,N-1. \label{eq:inequalities suff condition}
\end{equation}
\end{lem}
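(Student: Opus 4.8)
The plan is to prove the inequality $D_{N_T}\ge D_{S_T}$; combined with Lemma~\ref{Lemma1} this yields $D_{N_T}=D_{S_T}$ and shows that the optimum of the equivalent S$_T$ problem is attained in the N$_T$ problem. Recall from Section~\ref{s:single packet problem} that the optimal S$_T$ policy transmits at the single constant power $\tilde p=\max(p^*,s)$ for a total duration $E/(\tilde p+\epsilon)$, achieving $D_{S_T}=\frac{E}{\tilde p+\epsilon}\,r(\tilde p)$, where $s=E/T-\epsilon$ and $E=\sum_{n=1}^N E_n$. I would therefore exhibit a feasible N$_T$ policy attaining the same amount of data, namely the \emph{greedy} policy that transmits at the fixed power $\tilde p$ whenever the battery is non-empty and stays silent whenever it is empty.

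The first observation is that this greedy policy wastes no energy during its silent periods: by the leakage model, leakage occurs only while the battery is non-empty, hence only while the node is actively transmitting at power $\tilde p$. Consequently, if the greedy policy manages to drain all of the harvested energy $E$ by the deadline $T$, an energy balance (transmitted plus leaked equals $E$, both spent only during active time) forces the total active time to be exactly $E/(\tilde p+\epsilon)$, and hence the delivered data to be exactly $\frac{E}{\tilde p+\epsilon}r(\tilde p)=D_{S_T}$. The whole proof thus reduces to the single claim: \emph{under the hypothesis \eqref{eq:inequalities suff condition}, the greedy policy empties the battery by time $T$.}

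To establish this I would view the harvested energy as fluid ``work'': packet $n$ injects work $E_n$ at time $t_n$, and the greedy policy drains work at the constant rate $\rho\triangleq\tilde p+\epsilon$ whenever work is present. For such a single-server fluid system the time at which all work is cleared is $C=\max_{1\le k\le N}\big(t_k+\rho^{-1}\sum_{m=k}^{N}E_m\big)$; the lower bound on $C$ holds because the work arriving in $[t_k,\infty)$ cannot be served before $t_k$, and the matching upper bound follows by examining the last busy period, which necessarily begins at an arrival epoch. It then remains to verify $C\le T$. Since $\tilde p=\max(p^*,s)$ gives $\rho=\max(p^*+\epsilon,\,E/T)\ge E/T$, the term $k=1$ reads $E\le\rho T$ and holds; and for $k\ge2$, writing $k=i+1$ and recalling that $\sum_{n=1}^{i}T_n=t_{i+1}$ since the $T_n$ telescope, the hypothesis \eqref{eq:inequalities suff condition} reads $\sum_{n=1}^{i}E_n\ge\frac{E}{T}\,t_{i+1}$, which complements to $\sum_{m=k}^{N}E_m\le\frac{E}{T}(T-t_k)\le\rho(T-t_k)$, i.e.\ $t_k+\rho^{-1}\sum_{m=k}^N E_m\le T$. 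Hence $C\le T$, proving the claim.

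The step I expect to be the main obstacle is the regime $s<p^*$, i.e.\ $\tilde p=p^*>s$. There the optimal S$_T$ curve is a straight line of slope $p^*$ that empties the battery strictly before $T$, and this curve is in general \emph{infeasible} in the N$_T$ problem, since it would consume packet energy before that energy has arrived---the same obstruction that, when \eqref{eq:inequalities suff condition} fails, makes emulation genuinely impossible (Fig.~\ref{f:N Packets problem counterexample}). The resolution is that one must not copy the S$_T$ curve verbatim but reinterpret it as the greedy policy with interspersed idle gaps; the non-obvious fact that these gaps cost nothing---leakage being suspended on an empty battery---is precisely what preserves optimality. Converting the geometric hypothesis \eqref{eq:inequalities suff condition}, namely that the chord from the origin to $A_N$ stays below $\widetilde U(t)$ and meets it only at the points $A_1,\dots,A_N$, into the clearing-time bound $C\le T$ is the technical heart of the argument.
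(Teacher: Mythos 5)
Your proof is correct and follows the same overall strategy as the paper's: emulate the S$_T$ optimum inside the N$_T$ problem by the greedy policy that transmits at the fixed power $\tilde p=\max(p^*,s_0)$ whenever the battery is non-empty, observe that leakage is suspended during idle gaps, and close with the energy-conservation identity $T_{tot}(\tilde p+\epsilon)=\sum_n E_n$, which pins the total active time at $E/(\tilde p+\epsilon)$ and hence the data at $D_{S_T}$. Where you differ is in how the one delicate step --- that the greedy policy actually drains the battery by the deadline $T$ --- is justified. The paper argues geometrically: the hypothesis \eqref{eq:inequalities suff condition} forces the chord slopes $s_i$ from $A_i$ to $A_N$ to satisfy $s_i\le s_0\le\tilde p$, and from this it asserts $E(T)=U(T)$. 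You instead model the battery as a single-server fluid queue drained at rate $\rho=\tilde p+\epsilon$ and invoke the exact clearing-time formula $C=\max_k\bigl(t_k+\rho^{-1}\sum_{m\ge k}E_m\bigr)$, then verify $C\le T$ term by term from \eqref{eq:inequalities suff condition}. This buys a fully rigorous and self-contained verification of the step the paper leaves informal, and it makes transparent exactly where each inequality in the hypothesis is used (the $k$-th term of the max corresponds to the $(k-1)$-th inequality); the paper's version buys brevity and a picture (Fig.~\ref{f:NP problem finite}) but relies on the reader to translate the slope conditions into the statement that no residual energy survives past $T$. One small point worth making explicit in your write-up: the matching upper bound in your clearing-time formula uses that the last busy period begins at an arrival epoch, which holds here because the greedy server idles only when the battery is empty --- this is exactly the feasibility-by-construction remark the paper makes, so nothing is missing, but it deserves a sentence.
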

\begin{proof}
First note that the set of inequalities in \eqref{eq:inequalities suff
condition} expresses that the line segments from origin to points $A_1,
A_2,\dots,A_{N-1}$ have slope which are all greater than that of the segment
from origin to $A_N$. This requires that the line segments from the points
$A_1, A_2,\dots,A_{N-1}$ to the point $A_N$ have slopes $s_1,s_2,\dots,s_{N-1}$
respectively, which all are lower than or equal to the slope
$s_0=\frac{\sum_{n=1}^{N} E_n}{T}- \epsilon$ of the segment from the origin to
point $A_N$:
\begin{equation}
s_i \leq s_0= \frac{\sum_{n=1}^{N} E_n}{T}- \epsilon \label{eq:inequalities
suff condition bis}
\end{equation}
for $i=1,\dots,N-1$. 
An illustration of an N$_T$ problem satisfying the conditions of this lemma is
given in Fig.~\ref{f:NP problem finite}.


Consider now the S$_T$ problem with $E=\sum_{n=1}^{N}E_n$. Remember that the
optimal scheme for the S$_{T}$ problem requires transmitting at constant power
$\tilde{p}$ for a duration of $\frac{E}{\tilde{p}+\epsilon}$, with
$\tilde{p}=\max\left(p^*,s_0\right)\geq s_0$. We now argue that this solution
can be emulated in the N$_T$ problem. Consider the following transmission
strategy for the N$_T$ problem: transmit at $\tilde{p}$ whenever the battery is
non-empty, and remain silent otherwise. By construction, this strategy is
feasible. Again consider the $N$ time intervals between energy arrivals
$[t_1,t_2]$, $[t_2,t_3]$, $\dots$, $[t_N,T]$ of durations $T_1,T_2,\dots,T_N$,
respectively. We denote by $\tilde{T}_n$ (with $\tilde{T}_n\leq T_n$) the time
for which the node is transmitting in the $n^{th}$ interval. The total
transmission time is then given by $T_{tot} \triangleq
\sum_{n=1}^{N}\tilde{T}_n$. Moreover, combining the inequalities in
\eqref{eq:inequalities suff condition bis} with the fact that
$\tilde{p}=\max\left(p^*,s_0\right)\geq s_0$, we have that $\tilde{p}\geq s_i$
for $i=1,\dots, N-1$. This ensures that the considered strategy uses up the
whole available energy by time $T$, i.e., $E(T)=U(T)$. Then, by the
conservation of energy, the transmit and leakage energies must sum to the total
harvested energy:
\begin{equation}
T_{tot}\ \tilde{p} + \epsilon\ T_{tot} = \sum_{n=1}^N E_n = E,
\end{equation}
from which we get that $T_{tot}=\frac{E}{\tilde{p}+\epsilon}$, just like for
the optimal solution of the S$_T$ problem. This transmission strategy thus
transmits the same amount of data $D_{S_{T}}$ as the optimal solution of the
S$_T$ problem. Consequently, under the conditions given in the theorem, the
inequality $D_{N_{T}}\geq D_{S_{T}}$ holds.
\end{proof}

Building on the two previous lemmas, the following theorem can be formulated.
\begin{thm}
\label{Theorem line of sight problem} If the inequalities in
\eqref{eq:inequalities suff condition} hold, then:
\begin{itemize}
\item[(i)] $D_{N_{T}}=D_{S_{T}}$, that is, the optimal solutions of the N$_T$ problem and the S$_T$ problem with $E=\sum_{n=1}^{N}E_n$ are equivalent.
\item[(ii)] The optimal transmission strategy for the N$_T$ problem is to transmit at constant power $\tilde{p}$ whenever the battery is non-empty, and remain
silent otherwise, where the value $\tilde{p}$ corresponds to the solution of
the equivalent S$_T$ problem:
\begin{equation}
\tilde{p}=\max\left(p^*,\frac{\sum_{n=1}^{N} E_n}{T}- \epsilon\right)
\end{equation}
The total amount of transmitted data is
$\left(\frac{\sum_{n=1}^{N}E_n}{\tilde{p}+\epsilon}\right) r(\tilde{p})$.
\end{itemize}
\end{thm}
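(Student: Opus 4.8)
The plan is to derive the theorem directly as a corollary of Lemmas~\ref{Lemma1} and \ref{Lemma2}, since the substantive work has already been done in establishing those two lemmas; what remains is an assembly step plus reading off the optimal parameters.

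For part (i) I would argue by sandwiching. Lemma~\ref{Lemma1} provides, unconditionally, the inequality $D_{S_{T}}\geq D_{N_{T}}$, obtained by replaying the optimal N$_T$ curve as a feasible S$_T$ curve transmitting the same data. Under the hypothesis \eqref{eq:inequalities suff condition}, Lemma~\ref{Lemma2} supplies the reverse inequality $D_{N_{T}}\geq D_{S_{T}}$. Combining the two immediately yields $D_{N_{T}}=D_{S_{T}}$, which is exactly claim (i).

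For part (ii) I would extract the optimal N$_T$ strategy from the constructive proof of Lemma~\ref{Lemma2} rather than re-deriving it. That proof does more than prove an inequality: it exhibits an explicit feasible N$_T$ strategy, namely transmit at constant power $\tilde{p}$ whenever the battery is non-empty and remain silent otherwise, and shows this strategy transmits precisely $D_{S_{T}}$ bits. Since part (i) identifies $D_{S_{T}}=D_{N_{T}}$ as the optimal value of the N$_T$ problem, any feasible N$_T$ strategy attaining $D_{N_{T}}$ is by definition optimal, so this strategy is optimal. The value $\tilde{p}$ is inherited verbatim from the equivalent S$_T$ problem of Section~\ref{s:single packet problem}: with the single aggregated packet $E=\sum_{n=1}^{N}E_n$ the line-segment slope is $s_0=\frac{\sum_{n=1}^{N}E_n}{T}-\epsilon$, and the S$_T$ solution gives $\tilde{p}=\max\left(p^*,s_0\right)$. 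The total transmitted data is likewise read off the S$_T$ expression $\frac{E}{\tilde{p}+\epsilon}r(\tilde{p})$ evaluated at $E=\sum_{n=1}^{N}E_n$.

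I do not expect a genuine obstacle here, precisely because the content lives in the lemmas. The only points meriting care are inherited from the proof of Lemma~\ref{Lemma2}: first, that the proposed N$_T$ curve is well-defined and feasible (it is, since transmitting at $\tilde{p}$ only while the battery is non-empty automatically respects $0\leq E(t)\leq U(t)$), and second, that $\tilde{p}\geq s_i$ for all $i$ — which forces the battery to be fully drained by time $T$ and hence the total transmission time to equal $\frac{E}{\tilde{p}+\epsilon}$. Both facts are already in hand from Lemma~\ref{Lemma2}, so the theorem follows as a clean consequence of the two lemmas with no new estimate required.
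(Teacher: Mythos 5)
Your argument is correct and matches the paper's own (implicitly stated) reasoning: the paper introduces the theorem with ``Building on the two previous lemmas, the following theorem can be formulated,'' i.e., part (i) follows by combining the unconditional inequality $D_{S_{T}}\geq D_{N_{T}}$ of Lemma~\ref{Lemma1} with the conditional reverse inequality of Lemma~\ref{Lemma2}, and part (ii) is exactly the feasible strategy constructed in the proof of Lemma~\ref{Lemma2}, which attains the optimal value and inherits $\tilde{p}$ and the data expression from the S$_T$ solution.
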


\begin{figure*}[tb!]
\centering \small \psfrag{time}{time}\psfrag{e}{energy}
\psfrag{Ut}{{\color{red}$U(t)$}}\psfrag{Ht}{$H(t)$} \psfrag{Lt}{$L(t)$}
\psfrag{tUt}{{\color{red}$\tilde{U}(t)$}} \psfrag{Et}{{\color{blue}$E(t)$}}
\psfrag{p}{{\color{blue}$p^*$}} \psfrag{E}{$E=E_1+E_2+E_3$} \psfrag{E1}{$E_1$}
\psfrag{E2}{$E_2$} \psfrag{E3}{$E_3$} \psfrag{a1}{$t_1$} \psfrag{a2}{$t_2$}
\psfrag{a3}{$t_3$} \psfrag{T1}{$\widetilde{T}_1$} \psfrag{T}{$T$}
 \psfrag{T2}{$\widetilde{T}_2$} \psfrag{T3}{$\widetilde{T}_3$}
\psfrag{eps}{{\color{red}-$\epsilon$}}\psfrag{t}{$\frac{E}{p^*+\epsilon}$}
\psfrag{a}{\scriptsize{(a) original N$_T$ problem}} \psfrag{b}{\scriptsize{(b)
equivalent S$_T$ problem}}
\includegraphics[width=6.5in]{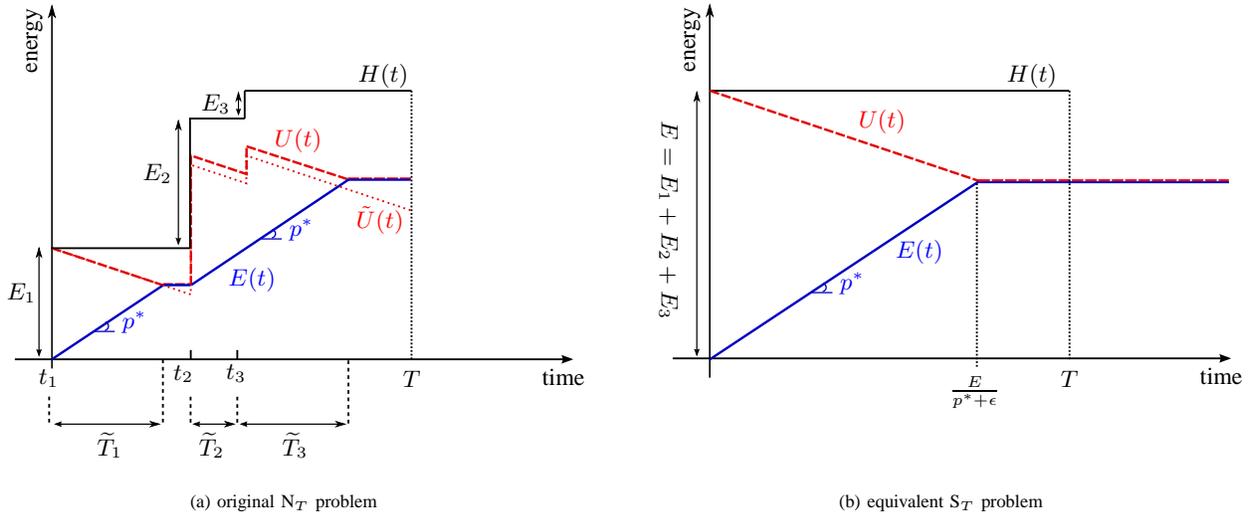}
\caption{Illustration of Theorem~\ref{Theorem line of sight problem}, with
$N=3$ energy packets and $\tilde{p}=p^*$. } \label{f:N Packets problem}
\end{figure*}

An illustration of the result in Theorem~\ref{Theorem line of sight problem} is
provided in Fig.~\ref{f:N Packets problem} for $N=3$ and $\tilde{p}=p^*>s_0$.
Part (a) of the figure depicts the N$_T$ problem, while its equivalent S$_T$
problem is given in part~(b). According to Theorem \ref{Theorem line of sight
problem}, for both problems the optimal strategy is to transmit at constant
power $p^*$. The only particularity of the N$_T$ problem is the presence of
silent zones in between energy packet arrivals. However, the distribution over
time of these silent zones do not affect the total duration of transmission,
guaranteeing the equivalence of both solutions in terms of amount of
transmitted data.

Now, building on Theorem~\ref{Theorem line of sight problem}, we can provide
the optimal solution for any N$_T$ problem. Consider all line segments
connecting the origin to points $A_i$, $i=1,\dots,N$. Among the segments that
do not intersect $\widetilde{U}(t)$ other than at point $\{A_1,\dots,A_N\}$, we
pick the one with the highest index, i.e., the rightmost end point. We denote
this index by $k$. We can now consider the $k$ first energy packets only, and
solve the corresponding $k$ packet problem with deadline $\sum_{n=1}^k T_n$,
using the equivalence given in Theorem~\ref{Theorem line of sight problem}. We
then proceed recursively by considering the remaining $N-k$ packet problem
separately. This recursive algorithm is described next. It takes as inputs the
number of packets $N$, the sizes of energy packets $\{E_n\}_{n=1}^N$, and the
packet interarrival times $\{T_n\}_{n=1}^N$. It returns as output the set of
optimal transmission powers $\{\tilde{p}_n\}_{n=1}^N$, meaning that the optimal
solution of the N$_T$ problem is to transmit at constant power $\tilde{p}_n$ in
the $n^{th}$ interval as long as the battery is non-empty.   The optimality of
the algorithm is proved in Appendix~\ref{Appendix proof algo}.

\vspace{0.5cm} \vline{\hspace{0.3cm}\begin{minipage}{3.5in}\vspace{0.0cm}
\begin{algorithm} \label{algo NT problem} \emph{N$_{T}$-problem}$(N,E_1,\hdots,E_N,T_1,\hdots,T_N)$\\
 \textbf{Input}:
\begin{itemize}
\item $N$: number of energy packets
\item $\{E_n\}_{n=1}^N$: amount of energy in each packet
\item $\{T_n\}_{n=1}^N$: interarrival times
\end{itemize}
\textbf{Output}: $\{\tilde{p}_n\}_{n=1}^N$\\
 \textbf{Algorithm}:
\begin{enumerate}
\item Find the highest $k \in \{1,\dots,N\}$ such that
\begin{equation}
\frac{\sum_{n=1}^{i} E_n}{\sum_{n=1}^{i}T_n}\geq\frac{\sum_{n=1}^{k}
E_n}{\sum_{n=1}^{k}T_n} \label{eq:ineq algo}
\end{equation}
for all  $i\in \{1,\dots,k-1\}$.

\item
\begin{equation}
\tilde{p}_i=\max\left(p^*,\frac{\sum_{n=1}^{k} E_n}{\sum_{n=1}^{k}
T_n}-\epsilon\right) \label{eq:power algo}
\end{equation}
for all  $i\in \{1,\dots,k\}$.
\item If $k<N$, find the $\{\tilde{p}_n\}_{n=k+1}^N$ by running \newline \emph{N$_{T}$-problem}$(N-k,E_{k+1},\hdots,E_N,T_{k+1},\hdots,T_N)$
\end{enumerate}

\end{algorithm}
\end{minipage}}\vspace{0.5cm}\\

We conclude this section by identifying two special cases of the solution
provided here:
\begin{itemize}
\item The special case of an $N$-packet problem without deadline
constraint can be solved by Algorithm \ref{algo NT problem} by setting
$T_N=\infty$. In this case, the inequalities in \eqref{eq:ineq algo} hold with
$k=N$, and \eqref{eq:power algo} reduces to $\tilde{p}_i=\max(p^*,0)=p^*$ for
all $i\in \{1,\dots,N\}$. Hence, the optimal transmission strategy for the
$N$-packet problem without deadline constraint is to transmit at constant power
$p^*$ whenever the battery is non-empty, and remain silent otherwise.
\item The special case of a perfect battery with no leakage is obtained by
setting $\epsilon=0$. In this case, $p^*=0$ (as detailed in Appendix
\ref{s:appendix properties r}), and Algorithm \ref{algo NT problem} reduces to
the solution proposed in~\cite{Yang:TC:10}.
\end{itemize}

\section{Conclusion}\label{Section:conclusions}

We have considered a communication system with an energy harvesting
transmitter. Taking into account various constraints on the battery we have
optimized the transmission scheme in order to maximize the amount of data
transmitted within a given transmission deadline. We have provided a general
framework extending the previous work in \cite{Yang:TC:10} and
\cite{Tutuncuoglu:WC:10} to the model with continuous energy arrival as well as
time-varying battery size constraints. We have also showed that the proposed
framework applies to the optimization of energy harvesting broadcast systems.
Moreover
we have studied the case of a battery suffering from energy leakage, for which
the optimal transmission scheme has been characterized for a constant leakage
rate.


\begin{appendix}
\subsection{Properties of $f(p) \triangleq \frac{r(p)}{p+\epsilon}$}
\label{s:appendix properties r} Remember that $r(p)$ is a non-negative strictly
concave increasing function, with $r(0)=0$.  We prove here that the function
$f(p) \triangleq \frac{r(p)}{p+\epsilon}$, with $p\geq0$, achieves its maximum
at a finite $p^* \in \mathbb{R}^{+}$, and is strictly decreasing for $p > p^*$.

 The derivative of $f(p)$ is calculated as follows:
\begin{equation}
f'(p)=\frac{r'(p)(p+\epsilon)-r(p)}{(p+\epsilon)^2} \label{eq: appendix}
\end{equation}
 We distinguish
two cases:\vspace{0.2cm}\newline
 1) If $\epsilon=0$, \eqref{eq: appendix} becomes
    \begin{equation}
    f'(p)=\frac{r'(p)p-r(p)}{p^2},
    \end{equation}
    which is analyzed as follows:
    \begin{itemize}
    \item if $p=0$, both the numerator and the denominator are zero. By l'H\^opital's
    rule, we get $\lim_{p\rightarrow0} f'(p)=r''(0)/2<0$, which
    follows from the strict concavity of $r(p)$.
    \item  if $p>0$, the numerator $r'(p)p-r(p)$ is a strictly negative function. Indeed,
    the strict concavity of $r(p)$ together with the fact that $r(0)=0$
    guarantees
    that $r(p)> r'(p)p$ for all $p > 0$.
    \end{itemize}
    Overall, $f'(p)$ is thus strictly negative for all $p\geq 0$. Hence, $f(p)$
    finds its maximum at $p^*=0$, and is strictly decreasing for $p>0$.\vspace{0.2cm}\newline
2) Consider now $\epsilon>0$. The sign of \eqref{eq: appendix} is analyzed by
focusing on its numerator only, which is rewritten for clarity as:
\begin{equation}
n(p)=r'(p) \epsilon + \left[r'(p)p-r(p) \right]
\end{equation}
We analyze $n(p)$ term by term:
    \begin{itemize}
    \item The first term $r'(p) \epsilon$ is a positive and strictly decreasing function, due to the increasing and strictly concave property of $r(p)$, respectively.
    \item The term in between brackets $r'(p)p-r(p)$ is equal to zero if $p=0$, and a strictly negative (as shown above), strictly decreasing function for $p>0$.  Indeed, the strict
    concavity of $r(p)$  guarantees that the derivative of this term $r''(p)p$ is strictly negative for all $p>0$.
    \end{itemize}
    Consequently, overall $n(p)$ is a strictly decreasing function of $p$ for $p\geq0$. More precisely, the
    lower $\epsilon$, the more rapid the decrease of $n(p)$ will be.
    The initial value at $p=0$ is positive and proportional to $\epsilon$:
    $n(0)=r'(0)\epsilon\geq 0$. On the other hand, the asymptotic value of $n(p)$ as $p\rightarrow \infty$ is
    negative:
    $\lim_{p\rightarrow\infty}n(p)=\lim_{p\rightarrow\infty}\left[r'(p)p-r(p) \right]<0$,where the inequality follows from the strict concavity of $r(p)$ together with the fact that
    $r(0)=0$. Between these two extremes, the strict decrease of $n(p)$ guarantees that $f'(p)$ changes sign  only once (from
positive to negative) at some finite value denoted by $p^*$, and that it will
remain strictly negative for all $p > p^*$. Hence, we have that:
       \begin{itemize}
    \item[(i)] $f(p)$ has a unique maximum, which is achieved at some finite value of \mbox{$p\geq 0$}, denoted by
    $p^*$. The lower the value of $\epsilon$, the lower the value of $p^*$ will
    be.
    \item[(ii)] $f(p)$ is strictly decreasing for $p > p^*$.
    \end{itemize}

\subsection{Proof of Optimality of Algorithm \ref{algo NT problem}}
\label{Appendix proof algo}

If the inequalities in \eqref{eq:ineq algo} hold with $k=N$ (as in
Fig.~\ref{f:NP problem finite}),
the optimal solution provided in Theorem~\ref{Theorem line of sight problem} is
produced by Algorithm~\ref{algo NT problem} in \eqref{eq:power algo} with
$k=N$.

Consider now that the inequalities in \eqref{eq:ineq algo} do not hold for
$k=N$. Then, denote by $k$ the highest $k<N$ for which \eqref{eq:ineq algo}
holds. This situation is depicted in Fig.~\ref{f:NP problem finite no line of
sight}. We first argue that the optimal solution is such that it empties the
battery before receiving the $(k+1)^{th}$ energy packet, i.e. before $t_{k+1}$.
Put differently, the optimal transmitted energy curve should intersect $U(t)$
at a time $t\leq t_{k+1}$. Assume that the opposite holds, as depicted in
Fig.~\ref{f:NP problem finite no line of sight proof}. Then, at some time
$t'\geq t_{k+1}$, the slope of the transmitted energy curve $E(t)$ has to
increase in order to guarantee to empty the battery at time $t=T$ (which is a
necessary condition for optimality). However, it is easy to realize (see the
dot-dashed curve in Fig.~\ref{f:NP problem finite no line of sight proof}) that
such strategy is suboptimal since it violates Theorem~\ref{t:unique}. Note that
the feasibility of the dot-dashed curve in Fig.~\ref{f:NP problem finite no
line of sight proof} in ensured by considering the largest $k$ rather than any
$k$ satisfying \eqref{eq:ineq algo}. Now, since the battery has to be emptied
before receiving the $(k+1)^{th}$ energy packet, we can optimally decouple the
problem. First, the $k$ packet problem with deadline $t_{k+1}=\sum_{n=1}^k T_n$
is solved independently. This subproblem satisfies the inequalities in
\eqref{eq:inequalities suff condition}, such that Theorem~\ref{Theorem line of
sight problem} guarantees that its optimal solution is obtained by
Algorithm~\ref{algo NT problem} in \eqref{eq:power algo}. Then, proceeding
recursively, the algorithm is run for the remaining $N-k$ packet problem which
can be considered as a new problem with an empty battery at the origin.

\begin{figure}
\centering \small \psfrag{time}{time}\psfrag{e}{energy}
\psfrag{Uti}{{\color{red}$\widetilde{U}(t)$}}\psfrag{Ht}{$H(t)$}
\psfrag{Et}{{\color{blue}$E(t)$}} \psfrag{p}{{\color{blue}$p$}}
\psfrag{E1}{$E_1$} \psfrag{eps}{{\color{red}-$\epsilon$}}\psfrag{T}{$T$}
\psfrag{s0}{{\color{blue}$s_0$}}\psfrag{s1}{{\color{blue}$s_1$}}\psfrag{sn}{{\color{blue}$s_n$}}
\psfrag{A1}{$A_1$}
\psfrag{A2}{$A_2$}\psfrag{A3}{$A_k$}\psfrag{AN}{$A_N$}\psfrag{An}{$A_n$}\psfrag{E1}{$E_1$}
\psfrag{E2}{$E_2$} \psfrag{E3}{$E_3$} \psfrag{a1}{$t_1$} \psfrag{a2}{$t_2$}
\psfrag{a3}{$t_3$} \psfrag{a4}{$t_{k+1}$} \psfrag{A}{$A$}
\includegraphics[width=2.7in]{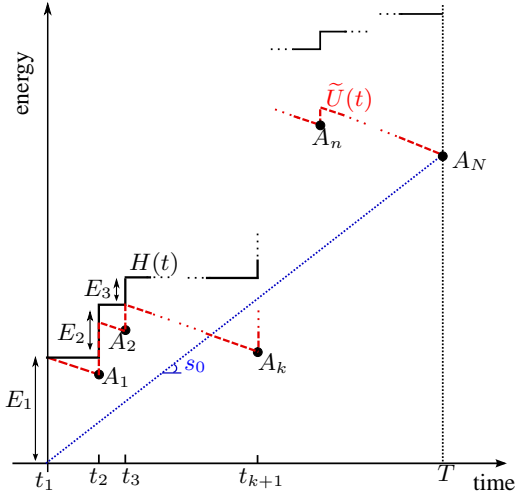}
\caption{General $N_T$ problem.} \label{f:NP problem finite no line of sight}
\end{figure}

\begin{figure}
\centering \small \psfrag{time}{time}\psfrag{e}{energy}
\psfrag{Uti}{{\color{red}$\widetilde{U}(t)$}}\psfrag{Ht}{$H(t)$}
\psfrag{Et}{{\color{blue}$E(t)$}} \psfrag{p}{{\color{blue}$p$}}
\psfrag{E1}{$E_1$} \psfrag{eps}{{\color{red}-$\epsilon$}}\psfrag{T}{$T$}
\psfrag{s0}{{\color{blue}$s_0$}}\psfrag{s1}{{\color{blue}$s_1$}}\psfrag{sn}{{\color{blue}$s_n$}}
\psfrag{A1}{$A_1$}
\psfrag{A2}{$A_2$}\psfrag{A3}{$A_k$}\psfrag{AN}{$A_N$}\psfrag{An}{$A_n$}\psfrag{E1}{$E_1$}
\psfrag{E2}{$E_2$} \psfrag{E3}{$E_3$} \psfrag{a1}{$t_1$} \psfrag{a2}{$t_2$}
\psfrag{a3}{$t_3$} \psfrag{a4}{$t_{k+1}$} \psfrag{A}{$A$}
\psfrag{t}{\small{$\sum_{n=1}^k T_n$}} \psfrag{tt}{$t'$}
\includegraphics[width=2.7in]{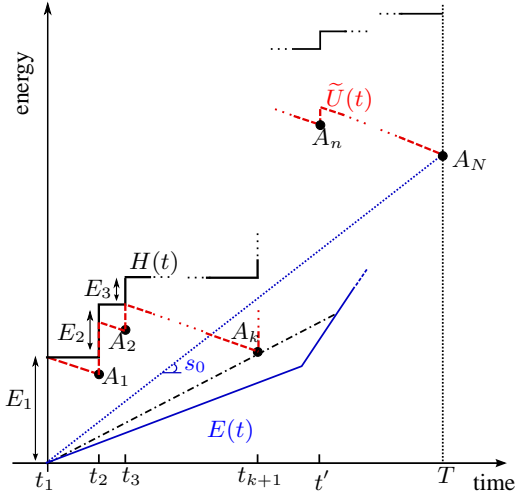}
\caption{General $N_T$ problem: suboptimal solution.} \label{f:NP problem
finite no line of sight proof}
\end{figure}

\end{appendix}

\bibliographystyle{IEEEtran}
\bibliography{ref}

\end{document}